\newcommand{\half}{\frac{1}{2}}
\newcommand{\uotimes}{~\underline{\otimes}~}
\newcommand{\votimes}{~\overline{\otimes}~}
\newcommand{\votime}{~\overline{\otimes}_1~}
\newtheorem{thm}{Theorem}
\newtheorem{definition}{Definition}
\newenvironment{proof}[1][Proof.]{\begin{trivlist}
\item[\hskip \labelsep {\bfseries #1}]}{\end{trivlist}}
\newenvironment{example}[1][Example.]{\begin{trivlist}
\item[\hskip \labelsep {\bfseries #1}]}{\end{trivlist}}
\newcommand{\qed}{\nobreak \ifvmode \relax \else
      \ifdim\lastskip<1.5em \hskip-\lastskip
      \hskip1.5em plus0em minus0.5em \fi \nobreak
      \vrule height0.75em width0.5em depth0.25em\fi}
\begin{document}

\title{On the origin of nonclassicality in single systems}
\author{S.            Aravinda}           \email{aravinda@imsc.res.in}
\affiliation{Institute of Mathematical Sciences, Chennai, India}
%Poornaprajna   Sadashivnagar,  Bengaluru}  
\author{R.   Srikanth}
\email{srik@poornaprajna.org}  \affiliation{Poornaprajna Institute  of
  Scientific  Research,   Sadashivnagar,  Bengaluru-   560080,  India}
\author{Anirban        Pathak}       \email{anirban.pathak@jiit.ac.in}
\affiliation{Jaypee   Institute  of   Information  Technology,   A-10,
  Sector-62, Noida, UP-201307, India}

\begin{abstract}
In the  framework of  certain general  probability theories  of single
systems,   we  identify   various   nonclassical   features  such   as
incompatibility,  multiple   pure-state  decomposability,  measurement
disturbance,  no-cloning and  the impossibility  of certain  universal
operations, with  the non-simpliciality of  the state space.   This is
shown to  naturally suggest  an underlying  simplex as  an ontological
model.   Contextuality turns  out  to be  an independent  nonclassical
feature, arising from the intransitivity of compatibility.
\end{abstract}

\pacs{03.65.Ta,03.65.Ud,  03.67.Dd}  

\keywords{Foundations   of  quantum   mechanics,  convex   operational
  theories, single system, nonclassicality}

% 03.65.Ta	Foundations of quantum mechanics;
% 03.67.Dd	Quantum cryptography and communication security
% 03.67.Hk	Quantum communication
% 03.65.Ud	Entanglement and quantum nonlocality

\maketitle

\section{Introduction}

 It is  an interesting  question in  quantum foundations  whether
quantum theory can be derived from information theoretic principles, a
program    that    may    be   traced    to    \cite{fuchs2001quantum,
  fuchs2003quantum,  brassard2005information} and  which has  received
considerable        attention       recently        \cite{Har01,CDP11,
  hardy2011reformulating, MM11, masanes2013existence}.

It is  well-known that nonclassicality in  multipartite systems arises
from nonlocal  non-signaling correlations  \cite{Bel64}, an  area that
has  been  extensively studied  from  a  foundational and  information
theoreitc  perspective  \cite{Bar07,  BLM+05,  BBL+06,  BM06,  BBL+07,
  Bar03, JH14}.

In particular,  various earlier works  have linked nonlocality  in the
context  of \textit{multipartite}  systems to  nonclassical properties
such   as  intrinsic   randomness  \cite{MAG06},   non-comeasurability
\cite{BGG+0,*Ban15,   SAB+05,   WGC09,   QVB14,   SB14},   uncertainty
\cite{OW10},   local   quantum  mechanical   features   \cite{BBB+10},
non-simpliciality of the state space \cite{BBL+07}. Yet, there appears
to be  no systematic study of  all these properties solely  within the
context of single  systems. It is this issue that  is addressed in the
present work.

We    call   the    above-mentioned   single-system    properties   as
\textit{base-level nonclassical}.  These  nonclassical features can be
reproduced by an epistemic model over  ontic states, as first shown by
Spekkens \cite{Spe07}.  By  contrast, the lack of  a joint probability
distribution   (JD)   over   measurement   outcomes   represents   the
complementary, ``top-level'' nonclassicality.

% As such, our  approach is an attempt to isolate  the most elementary
% surprise about a nonclassical theory.
Now, contextuality and nonlocality are  closely connected. Yet, to our
knowledge there are no works that relate contextuality to nonclassical
properties like incompatibility or no-cloning.   It is this issue that
is  addressed  here.   A  related   but  distinct  issue  is  that  of
identifying   a  simple   principle  to   derive  the   quantum  bound
\cite{Cab13,AFL+15,Hen12,Hen15,CSW14} on contextuality \cite{KS67}.

% Interestingly, our characterization of nonclassicality as associated
% with  the non-simpliciality  of the  state space  suggests a  novel,
% geometric approach to the ontology of nonclassical theories, wherein
% the ontological model is characterized  as an underlying simplex, of
% exponentially larger dimensionality for contextual theories.

In our  approach, we regard  a single system  as a composite  built by
aggregating    individual    \textit{properties},    represented    as
observables. In  practice, we shall  regard any given observable  as a
specific measurement  strategy.  The  basic idea  is that  a classical
system  is  one where  properties  coexist  peacefully, whereas  in  a
nonclassical  system, there  is some  sort of  ``tension'' prohibiting
their  peaceful  coexistence,  or \textit{congruence}.   Tension  here
corresponds   to   the  statement   that   the   state  space   of   a
finite-dimensional  single-system  isn't  a  simplex,  manifesting  as
incompatibility, no-cloning, etc.  Our  approach employs the framework
of  generalized   probability  theories  \cite{mackey2004mathematical,
  ludwig2012foundations,  BB93, Har01,  Har03,  Man03, BBL+07,  Bar07,
  Spe05, BW11, hardy2011foliable, CDP10}, with states and measurements
as  the basic  elements  of the  theory.  We  identify  two levels  of
nonclassicality:   the  above   mentioned  pairwise   tension  between
properties,   and  a   higher  level   nonclassicality  arising   from
``frustration'' in the graph of  congruence relations, leading to lack
of a joint probability distribution (JD) for the measurements.

In  Section  \ref{sec:frame}  we  formulate  our  framework  to  study
monopartite  systems,  identifying  the axiomatic  requirement  for  a
theory   to   be   classical   (Section   \ref{sec:classicality})   or
non-classical   (Section   \ref{sec:nonclassicality}).    Nonclassical
features that are consequences  of the nonsimpliciality assumption are
studied  in Section  \ref{sec:conseq}, among  them non-comeasurability
(Section    \ref{sec:nonco}),    measurement   disturbance    (Section
\ref{sec:disturb}), no-cloning  (Section \ref{sec:nocloning})  and the
impossibility  of  jointly  distinguishing all  pure  states  (Section
\ref{sec:jodi}).   The sense  in which  uncertainty also  follows from
nonsimpliciality is explained in Section \ref{sec:uncertain}.

Section   \ref{sec:simplionto}   explores  certain   consequences   of
nonsimpliciality in  conjunction with ontological  considerations. The
concept of an ontological model as a simplex underlying an operational
nonclassical theory  is developed in Section  \ref{sec:sqcup}. This is
then  applied  to  derive  a result  concerning  impossible  universal
operations   (\ref{sec:impossible})   and  preparation   contextuality
(\ref{sec:prep}).  This  Section thus  sets forth  what we  consider a
geometric approach to  the ontology of a simple  theory of monopartite
systems.

The above  enumerated sections  deal with  lower-level nonclassicality
arising from pairwise incongruence.  This type of nonclassicality will
be present even if the  pairwise incongruence is transitive.  However,
when pairwise  incongruence is intransitive, the  congruence structure
associated with  the properties  of the  single system  is nontrivial,
leading to states that lack a JD and thus correspond to contextuality,
as explored in Section  \ref{sec:JD}.  Section \ref{sec:cbox} develops
the  ontological  concept  of  an underlying  simplex  for  contextual
correlations.  Finally, we conclude in Section \ref{sec:conclu}.

\section{Regular theories for monopartite systems\label{sec:frame}}

 In   the   framework  of   generalized   probability   theories
\cite{mackey2004mathematical,   ludwig2012foundations,  BB93,   Har01,
  Har03, Man03, BBL+07, Bar07,  Spe05, BW11, hardy2011foliable, CDP10,
  chiribella2017quantum},  an  operational  theory  $\mathcal{T}$  for
monopartite systems is characterized by  the state space $\Sigma$, the
convex set of  all possible states associated to a  system governed by
the theory.  The set of extreme (or, pure)  points of $\Sigma$ is
denoted $\partial\Sigma$, taken to be  a finite number for our purpose
(so that $\mathcal{T}$ is discrete).

We shall define a  \textit{sharp measurement} as one that doesn't
admit  further   refinement,  i.e.,   the  measurement   shouldn't  be
expressible as a  convex combination of any other  measurements in the
theory  \cite{chiribella2016bridging}.   In quantum  mechanics,  sharp
measurments  correspond to  (non-degenerate) projective  measurements.
Suppose  that  measuring  sharp  measurement  $A$  on  state  $s$  and
post-selecting on outcome $a$ prepares  a pure state $\psi^A_a$, i.e.,
identifies some  state without any  error.  This can  be operationally
verified by repeated measurement of  $A$ on $\psi^A_a$, which does not
alter this state.  We call  such $\psi^A_a$ the \textit{eigenstate} of
the   sharp    measurement   $A$   with   ``eigenvalue''    $a$   (cf.
\cite{chiribella2015operational}).  In general,  not every measurement
has an  eigenstate.  We call a  theory $\mathcal{T}$ \textit{regular},
if every sharp measurement allowed  in $\mathcal{T}$ has an eigenstate
$\psi \in \partial\Sigma$.   Thus, a theory is irregular  if there are
sharp measurements that don't have eigenstates.  A canonical irregular
theory is the gdit theory, discussed in Section \ref{sec:gdit}.

Given  a set  of ``fiducial  measurements'' $A_j$,  \textit{tomographic
  separability}, a state $\psi \in  \Sigma$ is completely specified by
the  numbers  $P(a_j|A_j)$,  i.e.  by  the  conditional  probabilities
generated by the fiducial  measurements individually.  Else, the state
is  tomographically  nonseparable for  this  set.   This  may  be
considered as  the single-system  equivalent of  the axiom  of ``local
tomography''   \cite{barnum2014local}   or  ``tomographic   locality''
\cite{hardy2011reformulating},   encountered  in   reconstructions  of
quantum  theory.  The  minimal  set of  such  numbers $P(a|A)$  to
characterize   an    arbitrary   mixed    state   is    the   theory's
\textit{tomographic  dimension} $D$,  or  simply ``dimension''.   
Note that  this is  the same as  the number of  degrees of  freedom in
Hardy's axiomatic formulation of quantum theory \cite{Har01}.

\section{Classical theory\label{sec:classicality}}

A classical  theory $\mathcal{T}$  is a regular  theory where  all $m$
fiducial  measurements  $X_1,  X_2, \cdots,  X_m$  coexist  peacefully.
Mathematically,  we  express  peaceful  coexistence  as  follows.   We
associate   individual   state  spaces   $\Sigma_1,   \Sigma_2,\cdots,
\Sigma_m$ with the measurements, where $\Sigma_j$ is the convex hull of
all possible eigenstates of $X_j$.   Clearly, $\Sigma_j$ is a simplex,
meaning  that the  $n$ pure  points corresponding  to definite  values
(i.e.   measurement outcomes)  of measurement  $\Sigma_j$ are  linearly
independent.

Peaceful  coexistence  happens  precisely  if  the  full  state  space
$\Sigma$ is the \textit{convex direct product}
\begin{equation}
\Sigma \equiv \Sigma_1  \uotimes  \Sigma_2  \uotimes \cdots  \uotimes
\Sigma_m
\label{eq:tensorprod}
\end{equation}
of the individual state spaces $\Sigma_j$,  which is defined to be the
convex hull of the set of $n^m$ points:
\begin{equation}
\partial\Sigma  =  \partial\Sigma_1 \otimes  \partial\Sigma_2  \otimes
\cdots \otimes \partial\Sigma_m
\end{equation}
Since  these  points are  linearly  independent,  $\Sigma$ is  also  a
simplex and has dimension:
\begin{equation}
D = |\partial\Sigma|-1.
\label{eq:partial}
\end{equation}
Quite generally, we shall identify classicality with the simpliciality
of the state space $\Sigma$.

\begin{example}
Consider   the  classical   theory  $\mathcal{T}$   of  two   fiducial
measurements $X$  and $Z$  which take  values $x,  z \in  \{0,1\}$.  In
\textit{barycentric  coordinates}  \cite{GBC},   the  basis  for  both
$\Sigma_X$  and  $\Sigma_Z$  is  $\{(1,0),  (0,1)\}$.  The full  state space
$\Sigma_{XZ} \equiv  \Sigma_X \uotimes \Sigma_Z$  and is given  as the
convex hull of the points $\{(1,0), (0,1)\} \otimes \{(1,0), (0,1)\}$:
\begin{eqnarray}
|xz{=}00) &\equiv  (1,0,0,0),\nonumber\\
|xz{=}01) &\equiv (0,1,0,0), \nonumber\\
|xz{=}10) &\equiv (0,0,1,0),\nonumber\\
|xz{=}11) &\equiv (0,0,0,1).
\label{eq:ontixtate}
\end{eqnarray}
The general  mixed state is  an arbitrary point in  $\Sigma_{XZ}$, and
has  the   unique  barycentric  coordinate  representation   given  by
$(a,b,c,d)$   with    $a+b+c+d=1$   and    $a,b,c,d\ge0$.    Dimension
dim($\Sigma_{XZ}) =  3$. Thus, $\Sigma_{XZ}$  can be represented  as a
3-simplex (tetrahedron)  in $\mathbb{R}^3$.  Since the  four points in
Eq. (\ref{eq:ontixtate}) exist in  a three-dimensional space, they can
be  given  a  (not  unique)  Cartesian  representation,  e.g.  as  the
\textit{convex   independent}  points   $(0,0,0),  (0,1,1),   (1,0,1),
(1,1,0)$.  \hfill \qed
\end{example}

We  identify nonclassicality  with the  departure of  $\Sigma$ from  a
simplicial structure, as discussed below.

\section{Nonclassicality \label{sec:nonclassicality}}

Suppose  $\mathcal{T}$  is  a  regular  theory  characterized  by  two
fiducial  $n$-output  measurements  $X$  and $Z$,  whose  outcomes  are
denoted $x$ and $z$,  respectively. Here, peaceful coexistence between
$X$   and   $Z$  means   that   they   can  assume   definite   values
simultaneously. Thus conjunctive propositions (``$x$ and $z$'') can be
assigned a definite truth value (``true'' or ``false'').

If  $\mathcal{T}$  is  nonclassical,  $X$   and  $Z$  do  not  coexist
peacefully.  Loss of  peaceful coexistence means that $X$  and $Z$ are
linked disjunctively,  whereby propositions  (``$x$ and  $z$'') cannot
always be assigned  a definite truth value. In the  extreme case, only
propositions ``$x$ xor $z$'' have a definite truth value, even in pure
states.  This has the following geometric consequence.

Denote  the  $n$  eigenstates  of  $X$ by  $|\psi_X^j)$  and  the  $n$
eigenstates  of  $Z$  by  $|\psi_Z^k)$.  Therefore,  the  state  space
$\Sigma$  is  the  convex  hull of  these  $|\partial\Sigma|=2n$  pure
points.   Any state  in  $\Sigma$  is a  list  of $(n-1)$  conditional
probabilities   $P(x|X)$   and   $(n-1)$   conditional   probabilities
$P(z|Z)$. Therefore, the  dimension of $\Sigma$ is  $D \equiv 2(n-1)$.
The fact that $|\partial\Sigma| > D+1$  means that not all pure states
are linearly independent.  Therefore, $\Sigma$ is non-simplicial.

We   shall  refer   to   the  relations   between   the  elements   of
$\partial\Sigma$, that leads  to their linear dependence  and hence to
the   non-simpliciality  of   $\Sigma$,   as  the   ``nonsimpliciality
conditions''.  Typically, such a condition takes the form:
\begin{equation}
\sum_j q_j|\psi_X^j) = \sum_k r_k|\psi_Z^k),
\label{eq:constraint}
\end{equation}
with  $\sum_j q_j  =  \sum_k r_k=1$,  i.e., in  terms  of equality  of
certain   $X$-mixtures    and   $Z$-mixtures,   and   so    on.    Eq.
(\ref{eq:constraint}) has the interpretation  that two mixtures on the
LHS   and  RHS   are  operationally   indistinguishable.   Thus,   the
nonsimplicality conditions  have the  operational meaning  of multiple
pure-state   decompositions. 

Going beyond two  measurements, suppose a regular theory  has more than
two fiducial measurements,  namely $X, Y, Z, \cdots$.  It  is useful to
define the following
\begin{definition}
\label{defn:congruence}
The \textit{associated state space} of  a pair of measurements (say $X$
and $Z$), denoted $\Sigma_{XZ}$, is the convex hull of the eigenstates
of  $X$  and  $Z$.   We  describe  $X$  and  $Z$  as  \textit{pairwise
  congruent} precisely if $\Sigma_{XZ}$ is a simplex.  \hfill \qed
\end{definition}

A \textit{simple} nonclassical  theory of $m$-input-$n$-output systems
is a  noncontextual theory in  which all the $m$  fiducial measurements
are pairwise incongruent.  To each  fiducial measurement we associate a
normalized probability  distribution of  $(n-1)$ free  parameters, and
the state  space has dimension  of $D=m(n-1)$.  Because  
the property associated with each fiducial
measurement fails  to peacefully coexist  with any other,  we associate
precisely $n$  pure states  per fiducial measurement,  corresponding to
states that assigned a definite value under the measurement.  Therefore, there
will be $|\partial\Sigma|=mn$ pure states, and:
\begin{equation}
|\partial\Sigma|-D =m,
\label{eq:ah}
\end{equation}
from which the nonsimplicality of $\Sigma$ follows, for $m>1$.

The idea  that incongruence  causes properties that  are conjunctively
connected to become disjunctively  linked, is reflected mathematically
in the fact that the tensor product structure (\ref{eq:tensorprod}) is
replaced by one with tensor \textit{sum}:
\begin{equation}
\Sigma = \Sigma_1 \oplus \Sigma_2 \oplus \cdots \oplus \Sigma_m,
\label{eq:tensorsum}
\end{equation}
which captures  the idea  that the  dimensionalities are  added rather
than multiplies when composing systems.

\begin{example}
In the two-input-two-output operational theory $\mathcal{T}$, with two
dichotomic   measurements   $X$  and   $Z$,   suppose   the  $X$-   and
$Z$-eigenstates, which are the pure states of space $\Sigma$, are:
\begin{eqnarray}
\psi_X^+ &\equiv (1,0~|~\frac{1}{2},\frac{1}{2}) \nonumber \\ 
\psi_X^- &\equiv (0,1~|~\frac{1}{2},\frac{1}{2}) \nonumber \\
\psi_Z^+ &\equiv (\frac{1}{2},\frac{1}{2}~|~1,0) \nonumber \\
\psi_Z^- &\equiv (\frac{1}{2},\frac{1}{2}~|~0,1),
\label{eq:uneq}
\end{eqnarray}
where the vertical bar separates  the measurement probabilities of the
fiducial measurements $X$ and $Z$. The nonsimpliciality condition is:
\begin{equation}
\half(\psi_X^++\psi_X^-) = \half(\psi_Z^+ + \psi_Z^-)  = 
(\half,\half \mid \half, \half).
\label{eq:equil}
\end{equation}
The  number of  pure  points is  $|\partial\Sigma_{XZ}|=4$ whilst  the
dimension  is 2,  entailing nonsimpliciality.   Clearly, in  this case
$\Sigma \ne \Sigma_X \uotimes \Sigma_Z$. \hfill\qed
\end{example}

We stress that nonsimpliciality does not imply the non-existence of JD
in an underlying outcome-deterministic theory.  For instance, for each
state (\ref{eq:uneq}),  a JD manifestly  exists for $X$ and  $Z$.  For
example,  corresponding  to  the   state  $|\psi_X^+)$,  we  have  the
separable form $P(x,z|X,Z) = P(x|X)P(z|Z)$, where $P(x|X) = (1,0)$ and
$P(z|Z) = (\frac{1}{2}, \frac{1}{2})$.  And  so on for the other three
pure states.

Indeed, for $m=2$, one can always find a JD.  In Section \ref{sec:JD},
we shall find  that only incongruence with a  specific structure gives
rise to absence of JD.  In Section \ref{sec:conseq}, we will explore a
number  of nonclassical  consequences of  nonsimpliciality in  regular
theories.

\section{Consequences of nonsimpliciality\label{sec:conseq}}

If  a  regular  theory  $\mathcal{T}$ has  two  or  more  incongruent
measurements, then the space $\Sigma$  characterizing the theory is not
a simplex, as noted.  Interestingly, a number of nonclassical features
follow  from the  nonsimpliciality of  $\Sigma$, as  discussed in  the
following subsections.

\subsection{Co-measurability\label{sec:nonco}}

Two measurements  $X$ and  $Z$ each with  $n$ measurement  outcomes are
jointly  measurable  (or,  co-measurable)  if  there  exists  a  joint
measurement  $M_{XZ}$  such   that  the  statistics  of   $X$  and  $Z$
measurements are obtained under marginalization: $M_X(x|\psi) = \sum_z
M_{XZ}(x,z|\psi)$ and $M_Z(z|\psi) =  \sum_x M_{XZ}(x,z|\psi)$ for any
state $\psi  \in \Sigma_{XZ}$  \cite{BHS+0}.  In the  classical world,
all  measurements are  jointly  measurable trivially.  Co-measurability
roughly corresponds to  the commutativity of operators  in the Hilbert
space formulation of quantum mechanics.

Now consider  a nonclassical  theory with $m=2$  fiducial measurements,
denoted $X$  and $Z$.  They are taken  to be sharp  in the  sense that
there are states (namely, the  eigenstates) for which the measurements
produce definite  values. If  these two  are jointly  measurable, then
there   are  $(n^2-1)   \times  |\partial\Sigma|$   independent  terms
$M_{XZ}(x,z|\psi)$  that  constitute   the  putative  joint  operator,
considering  each  of  the  possible  $|\partial\Sigma|$  pure  states
$\psi$.   There  are   $2(n-1)  \times  |\partial\Sigma|$  independent
constraints  on them  due  to  the requirement  that  $M_X$ and  $M_Z$
statistics should be reproduced under marginalization.

A   nonsimplicality   condition,   which    is   of   the   type   Eq.
(\ref{eq:constraint}),   implies   additional  $(n^2-1)$   independent
constraints  of  the  type  $\sum_j q_j  M(x,z|\psi_j)  =  \sum_k  r_k
M(x,z|\psi_k)$,  which express  the  idea that  the joint  measurement
cannot be used to distinguish between  the mixtures in the LHS and RHS
of  (\ref{eq:constraint}).  These  constraints  overconstrain $M$  and
thus prohibit  joint measurability  only if  $2(n-1)|\partial\Sigma| +
n^2-1 > (n^2-1)|\partial\Sigma|$, or
\begin{equation}
|\partial\Sigma|\le\frac{n+1}{n-1} \le 3,
\label{eq:*constr}
\end{equation}
for  $n\ge2$.   Clearly,  this  can   never  happen  for  a  realistic
theory. Therefore, nonsimpliciality  by itself does not  imply lack of
joint  measurability.  However,  for  reasons explained,  we are  only
interested   in   regular   theories,    and   the   consequences   of
nonsimpliciality  in  them. In  this  case,  the relationship  between
simplciality of  $\Sigma$ and joint  measurability is tight,  as shown
below. For a different approach, see \cite{Pla16}.

\begin{thm}
Any pair of  measurements $X$ and $Z$ in a  regular theory are pairwise
co-measurable  iff their  associated  state space  $\Sigma_{XZ}$ is  a
simplex.
\label{thm:noJO}
\end{thm}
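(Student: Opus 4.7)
The plan is to prove the two directions separately, with regularity playing a decisive role only in the ``only if'' direction.

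For the \emph{if} direction, assume $\Sigma_{XZ}$ is a simplex with vertices $v_1,\ldots,v_K$. Because $\Sigma_{XZ}$ is by definition the convex hull of eigenstates of $X$ and of $Z$, every vertex is already such an eigenstate. For each vertex $v$ I define $M_{XZ}(x,z|v)$ to be any joint distribution whose marginals coincide with the known values $P(x|v)$ and $P(z|v)$---for concreteness the product $P(x|v)P(z|v)$, or $\delta_{x,x(v)}P(z|v)$ when $v$ is sharp for $X$. The simplex property gives every state $\psi\in\Sigma_{XZ}$ a unique convex decomposition over the $v_k$, so $M_{XZ}$ extends linearly and unambiguously to all of $\Sigma_{XZ}$; marginalization then holds on every state by linearity, proving co-measurability.

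For the \emph{only if} direction, assume $X$ and $Z$ admit a joint measurement $M_{XZ}$ in the theory. Since $M_{XZ}$ has $n^2$ outcomes and may, if necessary, be decomposed into sharp components with the same outcome structure, regularity supplies $n^2$ pure eigenstates $\psi^{(x,z)}\in\partial\Sigma$, each of which is automatically a simultaneous eigenstate of $X$ and of $Z$. The convex hull $\Delta$ of these $n^2$ joint eigenstates is a simplex of dimension $n^2-1$, and the task reduces to showing $\Sigma_{XZ}\subseteq\Delta$. To do this, pick any $X$-eigenstate $\psi\in\Sigma_{XZ}$ with value $x$ and set $\psi'\equiv\sum_z M_{XZ}(x,z|\psi)\,\psi^{(x,z)}\in\Delta$. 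A short computation using the marginalization identities shows that $\psi$ and $\psi'$ produce identical probabilities for every outcome of $M_{XZ}$, and therefore for $X$ and $Z$ as well. Tomographic separability with respect to the fiducial data then forces $\psi=\psi'$, placing $\psi$ inside the simplex $\Delta$; purity of $\psi$ pins it down to a vertex $\psi^{(x,z_0)}$. Applying the same argument to every $Z$-eigenstate yields $\Sigma_{XZ}=\Delta$, a simplex.

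I expect the principal obstacle to be the tomographic identification $\psi=\psi'$: one has to argue that operational indistinguishability under $M_{XZ}$ genuinely identifies two elements of $\Sigma_{XZ}$. This is precisely where it matters that $M_{XZ}$ is an honest measurement in the enlarged regular theory rather than a merely hypothetical joint distribution over outcomes, and a secondary subtlety---how to legitimately take $M_{XZ}$ to be sharp when it is delivered as a generic joint measurement---must also be addressed before regularity can be invoked to produce the simplex $\Delta$.
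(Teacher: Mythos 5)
Your ``if'' direction is essentially the paper's: fix $M_{XZ}$ on the vertices of the simplex so that the marginals come out right, then extend by the unique barycentric decomposition. (The paper additionally runs a parameter count to check that pure states outside $\Sigma_{XZ}$ do not overconstrain $M$; since co-measurability is only required for $\psi\in\Sigma_{XZ}$, your version is if anything cleaner, though you should note that on an $X$-eigenstate the marginalization signature $(1,0,\dots,0)$ already forces $M(x',z'|v)=0$ for $x'\neq x(v)$, exactly as in Eqs.~(\ref{eq:jointmeza})--(\ref{eq:jointmezb}).)

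The ``only if'' direction is where you genuinely diverge from the paper, and where the argument does not close. The paper argues by overdetermination: regularity plus marginalization already pin down $M(\cdot,\cdot|\psi)$ on every $X$- and $Z$-eigenstate, hence by convexity on all of $\Sigma_{XZ}$, and a nonsimpliciality relation of the form (\ref{eq:constraint}) then imposes $n^2-1$ further linear constraints that the already-determined $M$ cannot generically satisfy. You instead try to manufacture $n^2$ joint eigenstates and exhibit $\Sigma_{XZ}$ inside their hull, and three load-bearing steps fail as written. First, you cannot simply ``take $M_{XZ}$ to be sharp'': a generic joint measurement is a convex mixture of sharp ones, the individual sharp components need not reproduce the $X$ and $Z$ marginals (only their mixture does), and refinement can change the outcome set; since regularity applies only to sharp measurements, without this step you obtain no eigenstates at all. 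Second, regularity as defined supplies ``an eigenstate'' of a sharp measurement, whereas you need one for each of the $n^2$ outcomes. Third, the identification $\psi=\psi'$ from equality of $M_{XZ}$-statistics silently assumes that $X$ and $Z$ are jointly tomographically complete for $\Sigma_{XZ}$; the theorem does not restrict the fiducial set to $\{X,Z\}$, and two states of $\Sigma_{XZ}$ can agree on all $X$- and $Z$-probabilities while differing on a third fiducial measurement. You flagged this last point yourself, but it is not a deferrable technicality --- it is the substance of the direction. Finally, the concluding equality $\Sigma_{XZ}=\Delta$ is both unnecessary and generally false ($\Sigma_{XZ}$ has at most $2n$ extreme points while $\Delta$ would have $n^2$); what your strategy actually needs, and what would suffice, is only that every extreme point of $\Sigma_{XZ}$ is a vertex of the simplex $\Delta$, since the convex hull of a subset of affinely independent points is again a simplex.
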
 
\begin{proof}
``Only if''  direction: Let the  putative joint measurement  be denoted
  $M$.  Let the  ``eigenvalues'' of $X$ and $Z$ be  $x$ and $z$ taking
  values from 1  to $n$.  The following argument holds  for any one of
  the $2n$ these eigenstates.  Without  loss of generality, let $\psi$
  be  the eigenstate  of $X$  associated with  eigenvalue $x=1$.   The
  $M_X$ marginalization constraints require that:
\begin{align}
M(1,1|\psi) &+ M(1,2|\psi) + \cdots + M(1,n|\psi) = 1 \nonumber\\
M(2,1|\psi) &+ M(2,2|\psi) + \cdots + M(2,n|\psi) = 0 \nonumber\\
\vdots  \nonumber \\
M(n,1|\psi) &+ M(n,2|\psi) + \cdots + M(n,n|\psi) = 0.
\label{eq:jointmeza}
\end{align}
Note  that   the  signature  $(1,0,0,\cdots,0)$  is   imposed  by  the
assumption of pure state preparability in a regular theory.  The $M_Z$
marginalization constraints require that:
\begin{align}
M(1,1|\psi) &+ M(2,1|\psi) + \cdots + M(n,1|\psi) = p_1 \nonumber\\
M(1,2|\psi) &+ M(2,2|\psi) + \cdots + M(n,2|\psi) = p_2 \nonumber\\
\vdots \nonumber \\
M(1,n|\psi) &+ M(2,n|\psi) + \cdots + M(n,n|\psi) = \overline{p},
\label{eq:jointmezb}
\end{align}
where  $\overline{p}\equiv1-\sum_{j=1}^{n-1}p_j$.   This implies  that
all $M(j,k|\psi)$ vanish  for $j>1$ and the  $(n-1)$ independent terms
$p_j$  fix  the  terms  $M(1,k|\psi)$.   Thus,  clearly,  the  $(n-1)$
independent   numbers   $p_j$   completely   fix   the   elements   of
$M(j,k|\psi)$.   Repeating this  argument for  each of  the $2n$  pure
states generating  $\Sigma_{XZ}$, we  see that the  outcome statistics
for the  pure states fixes all  the elements of $M(j,k|\rho)$  for any
state, pure or mixed.

In general, the $p_j$'s are independent of each other.  Therefore, the
further $n^2-1$  independent constraints  due to  the nonsimpliciality
conditions  of  the  type  (\ref{eq:equil})  would  overdetermine  the
elements of $M$. Classical intuition  would suggest that no such extra
constraints are  possible.  But nonclassical phenomena  show that such
constraints can  exits and  essentially imply  that our  assumption of
existence of the joint measurement, is wrong.

The ``if''  direction: Intuitively, the challenge  to co-measurability
happens   inside  $\Sigma_{XZ}$   rather   than   outside.   Now,   if
$\Sigma_{XZ}$  is  a simplex,  then  the  above  operator $M$  can  be
constructed  uniquely with  regard to  $M$'s action  on the  states in
$\Sigma_{XZ}$. Consider pure states outside this space, of which there
are    $|\partial\Sigma|-|\partial\Sigma_{XZ}|$,    and   which    are
eigenstates of neither  $X$ nor $Z$. 

Assume for  simplicity that all  systems have $m+2$  inputs (including
$X$       and      $Z$)       and       $n$      outputs.        Then,
$|\partial\Sigma|-|\partial\Sigma_{XZ}|=mn$  and  there  are  $(n^2-1)
\times mn$ free variables, whereas the number of constraints are
\begin{subequations}
\begin{align}
c &= 2(n-1) \times mn + 
(n^2-1)\times\nu \label{eq:ish0} \\
&\le m(n-1)(3n+1)-(n^2-1),
\label{eq:ish1}
\end{align} 
\end{subequations}
where the first term  in the RHS of Eq. (\ref{eq:ish0})  is due to the
marginalizations and  $\nu\le m-1$  is the number  of nonsimpliciality
conditions of the type (\ref{eq:constraint})  in a regular theory.  On
the  number   hand,  the  number   of  free  variables  of   the  type
$M_{P,Q}(p,q|\psi)$ given pure state $\psi$, is
\begin{align}
v &= (n^2-1) \times mn \nonumber \\
  &= m(n-1)(n+1)n,
\label{eq:ish2}
\end{align}
Inasmuch as $c < v$ for $m\ge1$  and $n\ge2$, it follows that $M$ does
not get overconstrained.  
  \hfill \qed
\end{proof}

In regard to  the above proof, we  may also show a  similar result for
the case $m>2$,  by extending the above argument  in a straightforward
if elaborate way.

% This  is presented in  the Appendix \ref{sec:nonco+}.

Consider the 2-input-2-output case. Comeasurability  of $X$ and $Z$ in
the classical case is obvious.  We illustrate non-comeasurability in a
nonclassical theory.

\begin{example}
Consider  two measurements  $X$  and  $Z$, taking  values  $\pm1$, in  a
2-dimensional nonclassical theory, whose extreme states are:
\begin{subequations}
\begin{eqnarray}
|\psi_X^+) &\equiv& (1, 0~|~ 0.25, 0.75) \\
|\psi_X^-) &\equiv& (0, 1~|~ 0.75, 0.25) \\
|\psi_Z^+) &\equiv& (0.25, 0.75~|~ 1, 0)  \\
|\psi_Z^-) &\equiv& (0.75, 0.25~|~0, 1). 
\end{eqnarray}
\label{eq:list}
\end{subequations}
That the state space $\Sigma_{XZ}$ obtained  as the convex hull of the
state in (\ref{eq:list}) is not a simplex is seen by noting that
\begin{equation}
\frac{1}{2}(|\psi_X^+)  +   |\psi_X^-))  =   \frac{1}{2}(|\psi_Z^+)  +
|\psi_Z^-)).
\label{eq:psiX=Z}
\end{equation}
To see that  we cannot write down a joint  measurement $M_{XZ}$, assume
contrariwise that such $M_{XZ}$ exists.  To get the right marginalized
statistics for $|\psi_X^+)$, we require:
\begin{eqnarray}
M_{XZ}(+,+|\psi_X^+) + M_{XZ}(+,-|\psi_X^+) &=& 1 \nonumber \\
M_{XZ}(-,+|\psi_X^+) + M_{XZ}(-,-|\psi_X^+) &=& 0 \nonumber \\
M_{XZ}(+,+|\psi_X^+) + M_{XZ}(-,+|\psi_X^+) &=& 0.25 \nonumber \\
M_{XZ}(+,-|\psi_X^+) + M_{XZ}(-,-|\psi_X^+) &=& 0.75.
\end{eqnarray}
from    which   it    follows    that   only    $M_{XZ}(-,+|\psi_X^+)=
M_{XZ}(-,-|\psi_X^+)=0$   whereas  $M_{XZ}(+,+|\psi_X^+)=   0.25$  and
$M_{CS}(-,-|\psi_X^+)=0.75$.

Proceeding thus, one finds 
\begin{align}
&M_{XZ}(-,+|\psi_X^+) = 0; 
&M_{XZ}(-,+|\psi_Z^-) = 0
\nonumber\\
&M_{XZ}(-,+|\psi_X^-) = \frac{1}{4}; 
&M_{XZ}(-,+|\psi_Z^+) = \frac{3}{4}.
\label{eq:data}
\end{align}
To satisfy
the non-simpliciality condition, we must have
\begin{eqnarray}
&&M_{XZ}\left(j,k\left|\frac{1}{2}(|\psi_X^+) + 
|\psi_X^-))\right.\right)
= \nonumber\\
 &&~~ M_{XZ}\left(j,k\left|\frac{1}{2}(|\psi_Z^+) + 
|\psi_Z^-))\right.\right),
\end{eqnarray}
which  evidently   is  impossible,   as  seen  for   example,  setting
$(j,k)\equiv (-1,+1)$, since the LHS  yields $0.25/2$, whereas the RHS
yields $0.75/2$. \hfill \qed
\end{example}

It is important to stress that  lack of comeasurability does not imply
an absence of  JD in an underlying  outcome-deterministic theory.  For
example,  the  states  (\ref{eq:list})  have  manifestly  a  separable
(indeed,  product) form  $P(x,z|X,Z)  =  P(x|X)P(z|Z)$. 

\subsection{Measurement disturbance\label{sec:disturb}}

Measurement disturbance between two measurements refers to the possible
random  disturbance of  a  state  produced by  the  measurement of  an
measurement.  We  say that $X$  and $Z$  are mutually disturbing  if an
$X$-eigenstate  under  $Z$-measurement  yields  a  convex  mixture  of
$Z$-eigenstates and vice versa.

Intuitively,  the  nonsimpliciality  conditions  and  disturbance  are
closely  connected.   If  there  is no  measurement  disturbance,  all
measurements  can  be  performed  (infinitely many  times  if  required)
without   modifying  the   system,  and   all  pure   states  can   be
distinguished.  Thus, there would  be no indistinguishable mixtures of
the type $ \sum_{i=1}^m p_i\psi_i = \sum_{j=1}^n q_j\phi_j, $ and thus
no  nonsimpliciality  conditions. We  express  this  idea formally  as
following  theorem.    We  observe  that  without   the  nonsimplicial
conditions,  and   thereby  the  lack  of   corresponding  measurement
disturbance,  the   uncertainty  of  measuring  an   measurement  on  a
non-eigenstate would be  akin to the classical  probability, e.g., the
color of picked balls in a Polya urn \cite{KS06}.

\begin{thm}
Any pair of  measurements $X$ and $Z$ in a  regular theory are mutually
non-disturbing  iff their  associated state  space $\Sigma_{XZ}$  is a
simplex.
\label{thm:disturb}
\end{thm}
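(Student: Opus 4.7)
My strategy parallels that of Theorem~\ref{thm:noJO}: I would connect mutual non-disturbance to the existence of pure joint eigenstates of $X$ and $Z$, and in turn to the simplicial structure of $\Sigma_{XZ}$. Both directions hinge on the observation that, in a regular theory, sharp post-selection yields pure post-measurement states, whose nature (a bare $Z$-eigenstate versus a joint $(X,Z)$-eigenstate) governs whether the other measurement's value is preserved.

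For the forward direction (non-disturbance $\Rightarrow$ simplex), I would argue that if $Z$-measurement on $|\psi_X^i)$ does not yield a convex mixture of $Z$-eigenstates, then post-selecting on outcome $z=j$ must instead produce a pure state $|\psi_{ij})$ that retains the deterministic $X$-outcome $i$ (by non-disturbance, verified via a subsequent $X$-measurement in the regular theory) while also exhibiting a deterministic $Z$-outcome $j$ (having just been sharply measured). One thereby obtains $n^2$ distinct joint eigenstates as pure states of the theory. The associated space $\Sigma_{XZ}$, being the convex hull of all pure eigenstates of $X$ and $Z$, is then the convex hull of these $n^2$ affinely independent joint eigenstates, yielding a simplex of dimension $n^2 - 1$, matching the number of free parameters in an arbitrary joint distribution $p(i,j)$.

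For the reverse direction (simplex $\Rightarrow$ non-disturbance), I would rely on the paper's observation that simpliciality precludes any nonsimpliciality condition of the form $\sum_i p_i|\psi_X^i) = \sum_k q_k|\psi_Z^k)$. A dimensional count then shows that a simplicial $\Sigma_{XZ}$ cannot be spanned by $2n$ bare individual eigenstates alone, since such a configuration is confined to dimension $2(n-1)$, short of the $2n-1$ required for a $2n$-vertex simplex. Additional pure vertices must therefore exist; by the convex-hull construction of $\Sigma_{XZ}$, these extra vertices are themselves simultaneously eigenstates of $X$ and $Z$, i.e., joint eigenstates. Given such joint eigenstates, $Z$-measurement on $|\psi_X^i)$ decomposes the state uniquely (by simpliciality) into joint eigenstates of matching $x$-value, so that post-selection on any $z$-outcome preserves $x=i$. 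Hence $X$ and $Z$ are mutually non-disturbing.

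The main obstacle I anticipate lies in the reverse direction's identification of the extra simplex vertices with genuine joint eigenstates. Making this step rigorous requires careful use of the paper's convention that $\Sigma_{XZ}$ is spanned by eigenstates of the two measurements, together with the regular-theory guarantee that sharp measurements post-select onto pure eigenstates. An alternative route I would keep in reserve is to invoke Theorem~\ref{thm:noJO}: simpliciality of $\Sigma_{XZ}$ yields joint co-measurability of $X$ and $Z$, and the sharp joint measurement's eigenstates then serve directly as joint $(X,Z)$-eigenstates, from which non-disturbance follows immediately.
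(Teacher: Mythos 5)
Your ``reserve'' route is in fact the paper's entire proof: both directions of Theorem~\ref{thm:disturb} are obtained there by reduction to Theorem~\ref{thm:noJO}. For the forward direction the paper notes that if $X$ and $Z$ do not disturb each other one can measure them sequentially on the undisturbed state and thereby build a joint measurement $M_{XZ}(\psi)\equiv M_X(\psi)M_Z(\psi)$, so co-measurability and hence (by Theorem~\ref{thm:noJO}) simpliciality follow; for the converse, simpliciality gives co-measurability by Theorem~\ref{thm:noJO}, and the joint measurement determines an unknown state in $\Sigma_{XZ}$ deterministically, making it effectively non-disturbing. Your primary forward argument is a more concrete geometric version of the same physical idea: it is workable, but the affine independence of the $n^2$ joint eigenstates needs justification --- it holds only once you argue that the accessible joint statistics enlarge the tomographic dimension to $n^2-1$; if the state were still characterized by the two marginals alone (dimension $2(n-1)$), $n^2$ affinely independent points could not fit.

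The genuine gap is in your primary reverse argument. By Definition~\ref{defn:congruence}, $\Sigma_{XZ}$ is the convex hull of the given eigenstates of $X$ and $Z$, so its extreme points are necessarily drawn from that set; a dimension count cannot conjure ``additional pure vertices'' that were not already eigenstates. What the count $2n-1>2(n-1)$ actually shows is the contrapositive used throughout the paper: if the eigenstates are the $2n$ bare ones, $\Sigma_{XZ}$ cannot be a simplex. From the hypothesis that $\Sigma_{XZ}$ \emph{is} a simplex you may conclude that the eigenstate structure is not of that bare form, but nothing in this argument forces the actual vertices to be simultaneous eigenstates of both $X$ and $Z$: an eigenstate of $X$ need not have a deterministic $Z$-marginal, and the subsequent step (``$Z$-measurement decomposes $|\psi_X^i)$ uniquely into joint eigenstates of matching $x$-value'') presupposes exactly the joint-eigenstate structure you were trying to derive. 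The clean way to close the direction is the reduction you held in reserve: simpliciality $\Rightarrow$ co-measurability via Theorem~\ref{thm:noJO}, and then argue from the joint measurement to non-disturbance, which is what the paper does.
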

\begin{proof}  
Consider    regular   theory    $\mathcal{T}$   with    two   fiducial
noncomeasurable measurements, $X$ and $Z$.  Suppose that $X$ and $Z$ do
not  produce measurement  disturbance for  each other.   Then for  any
state in $\Sigma_{XZ}$,  we can perform $M_X$, and later  $M_Z$ on the
undisturbed state and  thereby construct the joint  measurement as per
the prescription
\begin{equation}
M_{XZ}|\psi)   \equiv    M_X(\psi)M_Z(\psi),
\label{eq:contradicting}
\end{equation}   
from   which  simpliciality   follows   per  Theorem   \ref{thm:noJO}.
Conversely,   if  $\Sigma_{XZ}$   is  a   simplex,  then   by  Theorem
\ref{thm:noJO},  measurements  $X$  and   $Z$  are  co-measurable.   In
particular, this means  that an unknown state in  $\Sigma_{XZ}$ can be
determined deterministically, making the joint measurement effectively
non-disturbing.  \hfill \qed
\end{proof}

It is not ruled  out that there may be states  in $\Sigma$ and outside
$\Sigma_{XZ}$ that  one or both may  disturb, even if $X$  and $Z$ are
mutually  non-disturbing.   The   following  example  illustrates  how
measurement disturbance enforces  the indistinguishability of mixtures
that are  operationally equivalent  by virtue of  the nonsimpliciality
conditions.

\begin{example} Suppose Alice prepares
(from  Bob's  perspective)  the   unbiased  $X$-mixture  $\rho  \equiv
  \frac{1}{2}(|\psi_X^+)    +   |\psi_X^-))$,    where   the    states
  $|\psi_X^\pm)$ are as defined  in Eq.  (\ref{eq:list}). This mixture
  is  operationally indistinguishable  from the  unbiased $Z$-mixture.
  Under measurement  of $X$,  the unbiased $X$-mixture  returns $\pm1$
  with equal  probability and leaves  the actual state, and  hence the
  mixture, undisturbed.

If  $Z$ is  measured, then  per (\ref{eq:list}),  both $Z$  values are
equiprobable, while the state after disturbance is:
\begin{equation}
\frac{1}{2}\left(\frac{1}{4}|\psi_Z^-) +
\frac{3}{4}|\psi_Z^+)\right) +
\frac{1}{2}\left(\frac{3}{4}|\psi_Z^-) +
\frac{1}{4}|\psi_Z^+)\right) = \rho,
\end{equation}
meaning that  the same mixture is  returned.  Any POVM in  this theory
(tossing  a loaded  coin and  measuring $X$  or $Z$  according to  the
coin's outcome) also does not  help in distinguishing the two mixtures
.\hfill \qed
\end{example}

In general,  the measurement disturbance entailed  by nonsimpliciality
is not unique.  Any  prescription for the post-measurement probability
distribution that preserves this indistinguishability is admissible.

\subsection{No-cloning\label{sec:nocloning}}

Cloning is an  operation by which, given an unknown  pure state $\psi$
in a state space, at least two  copies of $\psi$ are produced. We have
the following result.
\begin{thm}
An unknown pure state in the state space $\Sigma_{XZ}$ associated with
a pair of measurements $X$ and $Z$  in a regular theory is clonable iff
$\Sigma_{XZ}$ is a simplex.
\label{thm:noclon}
\end{thm}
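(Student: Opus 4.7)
The plan is to use Theorems~\ref{thm:noJO} and \ref{thm:disturb} as the main levers, treating cloning as the operational bridge between the existence of a joint measurement and the simpliciality of $\Sigma_{XZ}$. In one direction, a simplicial $\Sigma_{XZ}$ supports non-disturbing identification of pure states and therefore ``measure-and-reprepare'' cloning; in the other, a cloner lets one perform $X$ on one copy and $Z$ on another, manufacturing a joint measurement to which Theorem~\ref{thm:noJO} then applies.

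\textbf{If direction.} Suppose $\Sigma_{XZ}$ is a simplex. Its pure points are linearly independent, and, by the classical character of simplicial state spaces together with the regularity of $\mathcal{T}$, they are the eigenstates of a sharp fine-grained measurement that refines the joint measurement of $X$ and $Z$ guaranteed by Theorem~\ref{thm:noJO}. Theorem~\ref{thm:disturb} moreover ensures that this measurement leaves states of $\Sigma_{XZ}$ undisturbed. A cloner is then built by running the fine-grained measurement on the unknown pure input, reading the outcome label, and preparing a fresh eigenstate bearing the same label.

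\textbf{Only if direction.} Assume every pure $\psi \in \Sigma_{XZ}$ admits a clone, i.e.\ an independent two-copy preparation $\psi \otimes \psi$. Measuring $X$ on the first copy and $Z$ on the second yields outcomes $(x,z)$ with joint distribution $P(x|X,\psi)\,P(z|Z,\psi)$, which automatically reproduces the correct $X$- and $Z$-marginals. This procedure therefore realises a joint measurement $M_{XZ}$ on the pure states of $\Sigma_{XZ}$, and extends convex-linearly to all of $\Sigma_{XZ}$. Theorem~\ref{thm:noJO} then forces $\Sigma_{XZ}$ to be a simplex.

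\textbf{Main obstacle.} The delicate step is to justify that ``clone-then-measure-separately'' really defines a GPT-admissible joint measurement on mixed states as well, and not just on pure ones. This is exactly where nonsimpliciality would bite: any condition of the form (\ref{eq:constraint}), $\sum_j q_j|\psi_X^j) = \sum_k r_k|\psi_Z^k)$, would, by linearity of any valid cloning map, force $\sum_j q_j|\psi_X^j)\otimes|\psi_X^j) = \sum_k r_k|\psi_Z^k)\otimes|\psi_Z^k)$; yet these two preparations are manifestly distinguished by measuring $X$ on both copies, since the first yields a diagonal distribution on $(j,j)$ pairs while the second generally does not. This gives a self-contained contradiction and pinpoints precisely why cloning must fail in any non-simplicial $\Sigma_{XZ}$.
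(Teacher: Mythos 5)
Your proposal is correct, and while the ``if'' direction is essentially the paper's (simplex $\Rightarrow$ non-disturbance via Theorem~\ref{thm:disturb} $\Rightarrow$ identify the unknown state and re-prepare it), your ``only if'' direction takes a genuinely different route. The paper reduces cloning to \emph{non-disturbance}: it manufactures $t$ clones, runs tomography of $X$ and $Z$ on separate batches, and uses a segmented Chernoff bound to argue the state is reconstructed, thereby establishing a cloning--disturbance equivalence and then invoking Theorem~\ref{thm:disturb}. You instead reduce cloning to \emph{joint measurability}: two copies suffice, measuring $X$ on one and $Z$ on the other yields a product distribution with the right marginals, and Theorem~\ref{thm:noJO} closes the argument. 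Your route is more economical (no asymptotics, only two clones) and, importantly, your ``main obstacle'' paragraph is a complete self-contained proof of the only-if direction in contrapositive form: any admissible operation must be affine on the state space, so a nonsimpliciality condition $\sum_j q_j|\psi_X^j) = \sum_k r_k|\psi_Z^k)$ forces the two-copy ensembles $\sum_j q_j|\psi_X^j)\otimes|\psi_X^j)$ and $\sum_k r_k|\psi_Z^k)\otimes|\psi_Z^k)$ to coincide, which measuring $X$ on both copies refutes (perfect correlation versus a generically non-diagonal distribution). This is the standard GPT no-cloning argument and it pinpoints where nonsimpliciality bites more sharply than the paper's tomographic detour; what the paper's version buys in exchange is the explicit disturbance--cloning equivalence, which it reuses in Theorem~\ref{thm:disting}. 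One small caution on your ``if'' direction: the claim that the vertices of a simplicial $\Sigma_{XZ}$ are eigenstates of a single sharp fine-grained measurement refining $M_{XZ}$ is stronger than regularity immediately gives you and is not needed --- the paper gets by with repeated non-disturbing measurements of $X$ and $Z$ alone to pin down the state before re-preparation, and you should do the same.
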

\begin{proof}  
First,  we establish  the equivalence  of measurement  disturbance and
no-cloning.  Suppose in theory  $\mathcal{T}$, measurements $X$ and $Z$
do  not disturb  each other.   Then  each measurement  can be  measured
(repeatedly  if  required) without  disturbing  any  other.  From  the
resulting  classical record,  the state  of the  system is  completely
determined thus  cloned.  

For the other direction, we suppose that $\mathcal{T}$ permits perfect
cloning  of  an unknown  pure  state  $|\psi) \in  \Sigma_{XZ}$.   The
cloning operation can  be used to prepare multiple copies  of the same
state on any  number (say $t$) of other single  systems.  Now, suppose
that $X_1, X_2,  \cdots, X_t$ represent measurement trials  of a fixed
measurement (say  $X$) on  $t$ clones so  made.  Thus,  $X_j$ represent
independent and identically distributed random variables each of which
takes $n$ discrete values corresponding to the outcomes.

% Given fiducial  observables, $X, Z, \cdots$,  with the corresponding
% outcome  probabilities  being  $\vec{\mu}, \vec{\nu},  \cdots$,  the
% state of system is $\psi \equiv (\vec{\mu}~|~\vec{\nu}~|\cdots)$.

Let  the  $n$  frequencies  $f_j$   obtained  by  the  $t$  trials  be
represented by  the $n$-dimensional vector $\vec{f}$.   Let $\epsilon$
be a constant  in the range $[0,1]$.  We use  the notation $|\vec{a}|$
to  denote  the  vector  of component-wise  absolute  values  and  the
expression $\vec{a}  \ge \vec{b}$  to denote the  event that  for each
component $j$, $a_j \ge b_j$.

Since  the $X_j$'s  are independent  and identically  distributed, the
following bound can be shown to hold for the $t$ trials:
\begin{equation}
\textrm{Pr}\left(\left|\vec{f}       -      \vec{\mu}\right|       \ge
\epsilon\vec{\mu}\right)        \le       2\exp\left[\frac{-\epsilon^2
    t}{3n}\right],
\label{eq:chernoffsenior}
\end{equation}
which  implies that  for large  number  of trials  $t$, each  observed
frequency $f_j$  converges exponentially fast towards  the probability
$\mu_j$.

To  prove (\ref{eq:chernoffsenior}),  let $X  = \sum_{j=1}^t  X_j$ and
$\overline{X} \equiv  X/t$, where $X_j  \in [0,1]$.  By  the two-sided
Chernoff bound \cite{T09},
\begin{equation}
\textrm{Pr}\left(|\overline{X} - \mu| \ge \epsilon\mu\right) \le
2\exp\left[\frac{-\epsilon^2 \mu t}{2+\epsilon}\right],
\label{eq:chernoff}
\end{equation}
where $\mu$  is the theoretical mean.   Now divide $t$ into  $n$ equal
segments.  On the $k$th  segment, define the coarse-grained measurement
given by the  binary measurement $X^{(k)}$ which takes the  value 1 if
$X^{(k)}=x_k$, and  0 otherwise. Therefore the  $k$th segment involves
$t/n$  trials   of  random  variable  $X_j^{(k)}   \in  \{0,1\}$  with
probabilities $\{\mu_k,1-\mu_k\}$.

Applying the  Chernoff bound  (\ref{eq:chernoff}) to each  segment, we
can  bound $\textrm{Pr}(|\overline{X^{(k)}}-\mu_k|\ge  \epsilon\mu_k)$
by the right  hand side of (\ref{eq:chernoff})  with the substitutions
$t  \rightarrow t/n$  and $\mu\rightarrow  \mu_k$. The  conjunction of
these  $n$ events  is the  LHS of  (\ref{eq:chernoffsenior}), and  the
product    of    these    segmental    bounds   is    the    RHS    of
(\ref{eq:chernoffsenior}), where for simplicity,  we have replaced the
denominator   $2+\epsilon$    by   3   for   the    considered   range
$0\le\epsilon\le1$.

For a sufficiently large number of clones, we can perform a tomography
of either  measurement according  to Eq.   (\ref{eq:chernoffsenior}) by
measuring  sufficiently many  clones.   In this  way, the  probability
vector  associated   with  each   measurement  can  be   determined  to
reconstruct $|\psi)$ in $\Sigma_{XZ}$.

This establishes the equivalence  of disturbance and no-cloning, which
establishes the present theorem  in view of Theorem \ref{thm:disturb}.
\hfill \qed
\end{proof}
 
Refs.  \cite{MAG06,Bar07,BBL+07}  derive   a  no-cloning  theorem  for
general probabilistic theories more general  than QM in the context of
\textit{multipartite} systems subject to the no-signaling condition.

\subsection{Joint distinguishability\label{sec:jodi}}

States $\psi$ are jointly distinguishable if there is measurement that
uniquely  identifies the  individual states.   The \textit{measurement
  dimension}  $N$   is  the  number   of  pure  states  that   can  be
distinguished jointly,  i.e., by  a one-shot measurement  by measuring
one  or  more  fiducial   measurements  jointly.   Classical  intuition
suggests $D+1=N$, where the +1 in the LHS is due to normalization.  In
a  general  nonclassical  theory,   the  tomographic  and  measurement
dimensions do not match and we have:
\begin{equation}
N \le D+1,
\label{eq:tg}
\end{equation}
allowing  for the  possibility that  not all  pure states  are jointly
distinguishable.

\begin{thm}
All pure  states in  the state space  $\Sigma_{XZ}$ associated  with a
pair  of measurements  $X$  and $Z$  in a  regular  theory are  jointly
distinguishable iff $\Sigma_{XZ}$ is a simplex.
\label{thm:disting}
\end{thm}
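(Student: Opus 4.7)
The plan is to prove both directions by leveraging the structural machinery already built up, especially Theorem \ref{thm:noJO}, together with an elementary linear-algebra argument on the nonsimpliciality conditions.

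For the ``only if'' direction I would proceed by contrapositive. Suppose $\Sigma_{XZ}$ is not a simplex. Then the extreme points fail to be linearly independent, and as noted around Eq.~(\ref{eq:constraint}) this forces a relation of the form $\sum_j q_j |\psi_X^j) = \sum_k r_k |\psi_Z^k)$ with nonnegative weights summing to one. Now suppose a single measurement $M$ distinguishes every pure state in $\Sigma_{XZ}$; then $M$ assigns a deterministic, and distinct, outcome to each $|\psi_X^j)$ and each $|\psi_Z^k)$. Linearity of measurement statistics on mixtures converts the above equality into an identity between two probability distributions over outcomes: the left side is supported on the labels of the $X$-eigenstates with weights $q_j$, the right on the labels of the $Z$-eigenstates with weights $r_k$. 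Because distinct pure states have disjoint deterministic signatures, these two distributions have disjoint supports and therefore cannot coincide unless the relation is trivial, contradicting nonsimpliciality.

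For the ``if'' direction, assume $\Sigma_{XZ}$ is a simplex. Theorem~\ref{thm:noJO} supplies a joint measurement $M_{XZ}$, whose action on each of the $2n$ eigenstates of $X$ and $Z$ is already fixed by the marginal constraints of Eqs.~(\ref{eq:jointmeza})--(\ref{eq:jointmezb}) to yield a deterministic signature (of the form $(1,0,\ldots,0)$ in the row or column dictated by the eigenvalue). Because the simplex structure makes $\partial\Sigma_{XZ}$ linearly independent, these signatures are all distinct, so a suitable coarse-graining of $M_{XZ}$ that retains one outcome per vertex yields a valid fine-grained measurement in $\mathcal{T}$ identifying each pure state of $\Sigma_{XZ}$ deterministically and uniquely.

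The main obstacle is the ``if'' direction: one must pass from the abstract fact that a simplex admits dual vertex-indicator functionals to the operational claim that the corresponding measurement is allowed by $\mathcal{T}$. I would handle this by explicitly assembling the distinguishing measurement out of $M_{XZ}$, using regularity (which guarantees the eigenstates and their deterministic marginal signatures) to verify that the coarse-graining reproduces the desired one-shot identification on every extreme point, and hence by linearity on all of $\Sigma_{XZ}$.
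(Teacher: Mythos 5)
Your ``only if'' direction is correct but follows a genuinely different route from the paper. The paper composes the hypothetical discrimination procedure $\mathcal{P}$ with a classical cloner $\mathcal{Q}$ to obtain a cloner for unknown pure states, and then invokes Theorem~\ref{thm:noclon} to conclude simpliciality. You instead argue directly: a nonsimpliciality condition of type~(\ref{eq:constraint}) equates two mixtures supported on disjoint sets of pure states, and any error-free discriminating measurement would, by linearity, assign these two mixtures outcome distributions with disjoint supports, which cannot coincide. This is cleaner and more self-contained than the paper's reduction (one small caveat: you should argue from a general affine dependence $\sum_i \alpha_i\psi_i=0$, $\sum_i\alpha_i=0$, separated into positive and negative parts, rather than assuming the relation pits $X$-eigenstates against $Z$-eigenstates; the disjoint-support argument still goes through because each pure state appears on at most one side).

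The ``if'' direction, however, has a genuine gap. You assert that the joint measurement $M_{XZ}$ supplied by Theorem~\ref{thm:noJO} acts on each eigenstate with a ``deterministic signature.'' It does not: Eqs.~(\ref{eq:jointmeza})--(\ref{eq:jointmezb}) show that on the $X$-eigenstate with $x=1$ the joint outcome is $(1,k)$ with probability $p_k = P(Z{=}k\,|\,\psi_X^1)$, which is deterministic only in the $X$-slot. If $P(Z{=}1|\psi_X^1)>0$ and $P(X{=}1|\psi_Z^1)>0$, both $\psi_X^1$ and $\psi_Z^1$ produce the joint outcome $(1,1)$ with positive probability, so no coarse-graining of $M_{XZ}$ that ``retains one outcome per vertex'' can identify the state in a single shot without error. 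The paper closes exactly this hole by a different mechanism: simpliciality gives mutual \emph{non-disturbance} (Theorem~\ref{thm:disturb}), so $X$ and $Z$ can each be measured repeatedly on the undisturbed state, and the Chernoff bound~(\ref{eq:chernoffsenior}) lets the accumulated frequencies identify the pure state, the whole repetitive protocol being regarded as one ``one-shot'' procedure. Without invoking non-disturbance and repetition (or some equivalent tomographic step), your construction does not deliver the claimed unique identification.
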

\begin{proof} 
If $\Sigma_{XZ}$ is a simplex,  then by Theorem \ref{thm:disturb}, $X$
and $Z$ are mutually  nondisturbing.  Therefore, by sufficiently large
number of  repetitions (and  application of  the Chernoff  bound), any
given pure  state in  $\Sigma_{XZ}$ can  be uniquely  identified.  The
entire  repetitive   process  should  be  considered   as  a  one-shot
procedure.

To  prove the  converse:  let $\mathcal{P}$  denote  the procedure  to
jointly distinguish all  pure states in the  state space $\Sigma_{XZ}$
and  $\mathcal{Q}$ be  a  \textit{classical} cloner  that can  prepare
multiple  copies of  a  \textit{known} state  in $\Sigma_{XZ}$.  Then,
clearly  $\mathcal{Q} \circ  \mathcal{P}$ can  clone any  unknown pure
state in $\Sigma_{XZ}$, which per Theorem \ref{thm:noclon} entails the
simpliciality of $\Sigma_{XZ}$.  \hfill\qed
\end{proof}

Theorem  \ref{thm:disting}  implies  that  for  a  regular  theory,  a
simplicial space  $\Sigma$ is necessarily  the convex hull  of jointly
distinguishable  states, and  the  vertices of  a nonsimplicial  space
necessarily   lack   joint  distinguishability.    Given   regularity,
indistinguishability  has a  purely geometric  origin.  A  theory with
$\Sigma$ given as the simplex  of states that are \textit{not} jointly
distinguishable must  have some  other mechanism for  preventing joint
distinguishment, and would not be regular.

 The  preceding four  theorems  imply  an equivalence  among  the
nonclassical   properties   such   as   measurement   incompatibility,
measurement disturbance, etc., in the context of regular theories.  In
particular, Theorems  \ref{thm:noclon} and  \ref{thm:disting} together
imply  a  cloning-discrimination  equivalence,  essentially  using  an
approach  based  on  state  tomography. If  instead  one  uses  binary
observation tests between pairs of  states, then only three clones are
sufficient in each pairwise test to establish the simplex structure of
the state space \cite[Theorem 12]{CDP10}. 

\section{Uncertainty from nonsimpliciality in 
regular theories \label{sec:uncertain}}

Measurement  uncertainty refers  to the  property that  two (or  more)
measurements cannot simultaneously take definite values, as revealed by
a  measurement.   Our  definition  of a  nonclassical  regular  theory
presumes uncertainty.  In this  Section, we shall  physically motivate
that  assumption by  showing that  a theory  with nonsimplicial  state
space equipped with pure  state preparability will necessarily contain
measurement uncertainty.

\subsection{Uncertainty measure\label{sec:unc}}

It is convenient to quantify (measurement) uncertainty in a theory for
any two variables  $X$ and $Z$, with respective outcomes  $x$ and $z$,
as follows:
\begin{align}
\mathcal{U} =   \max_\psi\left[1-\left(\max_{x,z} 
\left[\frac{p(x|X,\psi) + p(z|Z,\psi)}{2}\right]\right)\right],
\label{eq:unc}
\end{align}
where $\psi$ runs over all states of the theory (cf. \cite{OW10}).  It
is  straightforward to  generalize  definition  (\ref{eq:unc}) to  $m$
$(\ge 2)$  measurements, essentially by  maximizing the average  of the
$m$ outcome probabilities over all $m$-outcome strings.

For the classical  theory, $\mathcal{U}=0$ for any  two properties $X$
and  $Z$.   For the  regular  nonclassical  theories characterized  by
states     (\ref{eq:list})     and    (\ref{eq:uneq}),     we     find
$\mathcal{U}=\frac{1}{8}$ and $\mathcal{U}=\frac{1}{4}$, respectively.

A theory  with uncertainty  need not  have nonsimplicial  state space.
This is because the space $\Sigma$  may be the convex hull of linearly
independent  vertices having  uncertainty in  the sense  of definition
(\ref{eq:unc}).  An instance  of such a theory is the  one whose state
space $\Sigma$ is the convex hull of four linearly independent states:
\begin{eqnarray}
|\psi_1) &\equiv& (1, 0~|~ 1,0) \nonumber \\
|\psi_2) &\equiv& (1, 0~|~ p, \overline{p}) \nonumber \\
|\psi_3) &\equiv& (p, \overline{p}~|~ 1, 0) \nonumber \\
|\psi_4) &\equiv& (p, \overline{p}~|~p, \overline{p}),
\label{eq:list-hash}
\end{eqnarray}
where  $\overline{p}=1-p$  and  $0\le  p\le1$.  Thus,  $\Sigma$  is  a
simplex.  Such a  theory has none of the no-go  theorems proven in the
previous section,  and hence the  uncertainty in such a  theory should
properly be regarded as due to classical ignorance. Such a ``Polya urn
theory''  is  really  a  classical theory  with  some  measurement  or
observation limitations.

\subsection{Gdit theories\label{sec:gdit}}

While   uncertainty  does   not  imply   nonsimpliciality,  conversely
nonsimpliciality  too does  not  entail  uncertainty.  A  \textit{gdit
  theory}  (gdit for  ``generalized dit'')  is one  with nonsimplicial
space   $\Sigma$   and  $\mathcal{U}=0$.    The   pure   state  in   a
$m$-input-$n$-output  gdit  theory  $\mathcal{T}_G$ is  one  of  $n^m$
\textit{boxes} represented  by an  $m$-ary vector $(x_1,  x_2, \cdots,
x_m)$,  where $x_j  \in \{1,\cdots,n\}$.   Measuring measurement  $X_j$
returns  $x_j$   deterministically.   An  arbitrary  mixed   state  is
described by $m$ probability distributions  with $n$ outcomes, so that
the tomographic dimension of the  states space $\Sigma_G$ is $m(n-1)$.
Therefore,  there are  exponentially many  relations between  the pure
states,  rendering them  linearly non-independent  and thereby  making
$\Sigma_G$ nonsimplicial. As proven below  later, a gdit theory is not
regular, because it lacks pure state preparability.

The simplest  gdit theory  $\mathcal{T}_G^{2,2}$ is  the 2-dimensional
theory, given  by the  deterministic version of  Eq.  (\ref{eq:uneq}).
The space $\Sigma_G^{2,2}$ is the convex hull of
\begin{eqnarray}
|g_0) &\equiv (0,0) &\equiv (1,0~|~1,0) \nonumber \\ 
|g_1) &\equiv (1,0) &\equiv (0,1~|~1,0) \nonumber \\
|g_2) &\equiv (0,1) &\equiv (1,0~|~0,1) \nonumber \\
|g_3) &\equiv (1,1) &\equiv (0,1~|~0,1).
\label{eq:ceq}
\end{eqnarray}
This satisfies the condition:
\begin{align}
\half(|g_0) + |g_3)) &= \half(|g_1) + |g_2)) \label{eq:03|12}
\\
&= (\half,\half), \nonumber
\end{align}
which, in our framework, means that the mixtures on the LHS and RHS of
(\ref{eq:03|12}) are indistinguishable.

\subsection{Symmetric and asymmetric gdit theories}

In a \textit{symmetric gdit}  theory, the prescription for measurement
disturbance is that when one of  the $m$ measurements is performed, then
the definite value of the performed measurement remains the same but the
value  assignments  to  the   unmeasured  $m-1$  measurements  will  be
equi-probably distributed:
\begin{align}
(x_1,x_2,\cdots, x_i,&\cdots,x_m)
    \stackrel{X_i}{\longrightarrow}
d^{1-m} \times \nonumber\\
 & \sum_{x_j^\prime \ne x_i} (x_1^\prime,x_2^\prime,\cdots, x_i,\cdots,x_m^\prime)
\end{align}
This  corresponds to  the  concept  of a  mutually  unbiased basis  in
quantum mechanics. For example, for the states in (\ref{eq:ceq})
\begin{align}
\left. \begin{array}{c}
|g_0) \\ |g_2)
\end{array}
\right\} \stackrel{X}{\longrightarrow}
\frac{1}{2}(|g_0) + |g_2)),\nonumber\\
\left. \begin{array}{c}
|g_1) \\ |g_3)
\end{array}
\right\} \stackrel{X}{\longrightarrow}
\frac{1}{2}(|g_1) + |g_3)),
\label{eq:post2}
\end{align}
whereby  the  $X$   value  is  unaltered  whilst  the   $Z$  value  is
irreversibly  replaced by  a uniform  distribution.  This  measurement
disturbance in the pattern of the measurement \textit{uncertainty} for
states  $\psi_X^\pm$ in  Eq.  (\ref{eq:uneq}). 

In   an  \textit{asymmetric   gdit}  theory,   the  prescription   for
measurement disturbance  is similar, except that  the post-measurement
distribution  of   values  of   the  unperformed  measurements   is  not
necessarily uniform:
\begin{align}
(x_1,x_2,\cdots, x_i,&\cdots,x_m)
    \stackrel{X_i}{\longrightarrow}
\sum_{x_j^\prime, j \ne i} p_{x_1^\prime,\cdots, 
x^\prime_{i-1},x^\prime_{i+1},\cdots,x_m^\prime}
 \nonumber\\
 & 
\times (x_1^\prime,x_2^\prime,\cdots, x_i,\cdots,x_m^\prime),
\label{eq:asgdit}
\end{align}
where  $\sum_{x_j^\prime,  j \ne  i}  p_{x_1^\prime,x_2^\prime,\cdots,
  x^\prime_{i-1},x^\prime_{x+1},\cdots,x_m^\prime}=1$.            Such
asymmetric disturbance  can be shown  to leave equivalent  mixtures of
pure states indistinguishable, as in the symmetric gdit theory.

For  example,   for  the   states  in  (\ref{eq:ceq})   an  asymmetric
prescription for disturbance could be:
\begin{align}
\left. \begin{array}{c}
|g_0) \\ |g_2)
\end{array}
\right\} \stackrel{X}{\longrightarrow}
\frac{1}{4}|g_0) + \frac{3}{4}|g_2),\nonumber\\
\left. \begin{array}{c}
|g_1) \\ |g_3)
\end{array}
\right\} \stackrel{X}{\longrightarrow}
\frac{1}{4}|g_3) + \frac{3}{4}|g_1),
\label{eq:post3}
\end{align}
which replaces  the symmetric  disturbance in  (\ref{eq:post2}).  Here
the  $X$ value  is unaltered  $X=\pm1$, respectively)  whilst the  $Z$
value is  distributed in the  pattern of the  \textit{uncertainty} for
states $|\psi_X^\pm)$  in Eq.  (\ref{eq:list}).  The  following result
shows that  the nonsimpliciality  conditions do  not uniquely  fix the
disturbance.

Generalizing  the  above  argument,   in  a  symmetric  or  asymmetric
$m$-input-$n$-output gdit theory,  the measurement disturbance ensures
that   the  indistinguishability   implied  by   the  nonsimpliciality
conditions   is   preserved.    To   see  this,   suppose   that   the
nonsimpliciality condition in the gdit theory has the form:
\begin{align}
\half[(0,0&,x_3,x_4,\cdots) +
(1,1,x_3,x_4,\cdots)] \nonumber \\
   &= \half[(0,1,x_3,x_4,\cdots) +
(1,0,x_3,x_4,\cdots)],
\label{eq:noex}
\end{align}
where $x_j$ $(j=1,2,\cdots)$ are outcomes  that would be obtained when
measurement $X_j$ is  performed.  Given the two  mixtures represented by
the LHS  and RHS of  (\ref{eq:noex}), suppose $X_1$ is  measured. Then
the post-measurement mixture due to  the first (resp., second) term in
the LHS  and the  first (resp., second)  term in the  RHS will  be the
same.  Thus, the post-measurement mixture is the same for both initial
states. Now, if  $X_2$ is measured, then  the post-measurement mixture
due  to the  first (resp.,  second)  term in  the LHS  and the  second
(resp., first)  term in the  RHS will  be the same.   Clearly, similar
arguments can be  given for all other $X_j$'s.   It is straightforward
to generalize the  above argument from bits to dits,  and to relations
between pure states more elaborate than Eq. (\ref{eq:noex}).

An  insight that  emerges from  noting the  joint measurability  in an
arbitrary nonclassical theory vs  the non-comeasurability in a regular
nonclassical theory (Section \ref{sec:nonco})  is that the idempotency
of the marginal distribution helps  ``kill'' degrees of freedom in the
joint measurement, facilitating  non-comeasurability.  By that yardstick,
a gdit theory, in which all pure states marginals are idempotent,
should readily  lack joint measurability. This is indeed
the case, as the following example shows.

\begin{example}
Consider  the 2-input-$n$-output  gdit theory.   Proceeding along  the
lines of  Eq.  (\ref{eq:jointmeza}), it  is readily observed  that the
joint measurement should,  if it exists, have the  form $M(j,k|(r,s)) =
\delta^j_r\delta^k_s$.   Now a  particular nonsimpliciality  condition
is:
\begin{equation}
\half((j,k) + (j^\prime,k^\prime)) = \half((j,k^\prime) + (j^\prime,k)),
\label{eq:simplicdit}
\end{equation}
where  $j  \ne  j^\prime$  and   $k  \ne  k^\prime$.  For  any  state,
Eq.    (\ref{eq:simplicdit})   constrains    the   joint    measurement
$M(a,b|\psi)$ such that:
\begin{equation}
M(j,k|\psi) + M(j^\prime,k^\prime|\psi) = M(j,k^\prime|\psi) + 
M(j^\prime,k|\psi).
\label{eq:simplicdit0}
\end{equation}
Suppose such a joint operator $M(a,b|\psi)$ exists.  Let $\psi=(j,k)$.
In Eq.   (\ref{eq:simplicdit0}), of  the four terms,  the first  is 1,
while  the remaining  vanish,  implying that  the  equation cannot  be
satisfied.   It  is easy  to  extend  this  argument to  an  arbitrary
$m$-input-$n$-output  gdit  theory.   Therefore, no  joint  measurement
exists for gdit theory.  \hfill \qed
\end{example}

As a canonical irregular theory, no sharp measurement in the
theory has an associated eigenstate.  
Thus, gdit  theory lacks  preparability.   For  any  state, there  is  no
measurement that leaves  it undisturbed. Therefore, a pure  state in a
gdit theory can  never be prepared, and state  preparations are always
ambiguous.  It turns
out that the preparation ambiguity in  a gdit theory can be matched to
the uncertainty of a regular theory, as discussed below.

\subsection{Uncertainty-disturbance correspondence\label{sec:corr}}

An   $m$-input-$n$-output  (symmetric   or  asymmetric)   gdit  theory
$\mathcal{T}_G$,   characterized   by    a   disturbance   probability
distribution   $p_{x_1^\prime,  x_2^\prime,   \cdots,  x_{i-1}^\prime,
  x_{i+1}^\prime  ,\cdots,  x_m^\prime}$  given  in  (\ref{eq:asgdit})
under  measurement  of  $X_i$  is said  to  \textit{correspond}  to  a
$m$-input-$n$-output regular theory $\mathcal{T}_U$, if the pure state
with definite value $X_i=x_i$ is given by
\begin{align}
|X_i=x_i)_U &= 
\sum_{x_j^\prime \ne x_i} p_{x_1^\prime,,\cdots, 
x_{i-1}^\prime,x_{i+1}^\prime,\cdots,x_m^\prime} \nonumber \\
  & \times 
(x_1^\prime,x_2^\prime,\cdots, x_i,\cdots,x_m^\prime)_G,
\end{align}
where the paranthesized term with subscript $G$ is a gdit state.  This
point was already  noted in connection with  Eqs. (\ref{eq:post2}) and
(\ref{eq:post3}).

The number of pure states in  theory $\mathcal{T}_G$ is $n^m$ while it
is $mn$ in  the corresponding regular theory  $\mathcal{T}_U$.  On the
other hand, both have the  same dimension $m(n-1)$, which implies that
$\mathcal{T}_G$ will  have exponentially more number  of nonsimplicial
conditions.    To  determine   the   nonsimpliciality  conditions   in
$\mathcal{T}_U$ starting from those in $\mathcal{T}_G$, we measure the
LHS  and  RHS  of  a  gdit nonsimpliciality  conditions  in  the  same
measurement,  and  re-interpret   the  resulting  measurement-disturbed
states as pure states with  uncertainty in the regular theory obtained
using the above $\mathcal{T}_G-\mathcal{T}_U$ correspondence.

For example,  to get from  Eq. (\ref{eq:03|12}) to  (\ref{eq:psiX=Z}), we
measure the LHS of (\ref{eq:03|12}) by $X$, which effects according to
the above recipe:
$$
|g_0) \rightarrow |\psi_X^+);\qquad
|g_3) \rightarrow |\psi_X^-),
$$
and we measure  RHS of (\ref{eq:03|12}) by $Z$, which effects:
$$
|g_1) \rightarrow |\psi_Z^+);\qquad
|g_2) \rightarrow |\psi_Z^-),
$$  from  which  (\ref{eq:psiX=Z}).  The gdit  and  the  corresponding
regular theories  have the  same measurements,  though the  states that
lack measurement uncertainty are different.

Interestingly, a gdit theory and  its corresponding regular theory are
\textit{operationally indistinguishable} in our framework, essentially
because of the interplay of  disturbance and uncertainty, as discussed
below.

For our purpose,  a theory with tomographic separability  can be fully
characterized  at  the  operational  level in  two  steps:  (1)  State
preparation: measuring  fiducial measurement $A$ and  post-selecting on
outcome $a$; (2) Measurement: determining the outcome probabilities of
a subsequent  measurement $B \ne  A$ of other measurements.   If $B=A$,
then the outcome is always assumed $a$ (i.e., repeatability assumed).

Suppose we have two  theories $\mathcal{T}_1$ and $\mathcal{T}_2$, and
two $n$-dimensional  measurements $X$  and $Z$ in  $\mathcal{T}_1$, and
their analogues in  $\mathcal{T}_2$, also referred to as  $X$ and $Z$.
In  either theory,  we can  prepare a  state by  measuring one  of the
measurements (say $X$) and post-selecting  on a particular outcome.  If
a subsequent measurement of the  other measurement (say $Z$) yields the
same outcome probabilities, and this  holds true for every preparation
in $\mathcal{T}_1$  and its analogous preparation  in $\mathcal{T}_2$,
then   $\mathcal{T}_1$    and   $\mathcal{T}_2$    are   operationally
indistinguishable.    An    observer    cannot    determine    whether
$\mathcal{T}_1$ or $\mathcal{T}_2$ is applicable in the operational world.

Note  that in  a gdit  theory, because  of preparation  ambiguity, the
first  step above  does not  produce an  unambiguous pure  state.  For
example, suppose that in the 2-input-2-output gdit theory, Alice needs
to prepare $(X=0,Z=0)$.   If she measures $X$ on an  unknown state and
postselects on 0, $Z$ remains indeterminate.  A subsequent measurement
on  $Z$  disturbs  $X$.   This ambiguity  in  preparation  mimics  the
measurement uncertainty  of the  corresponding regular theory,  as the
following result shows.

\begin{thm}
A  gdit theory  $\mathcal{T}_G$ and  its corresponding  regular theory
$\mathcal{T}_U$ are operationally indistinguishable from each other.
\label{thm:UG}
\end{thm}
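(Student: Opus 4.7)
\begin{proofsketch}
The plan is to exploit the operational characterization of a tomographically separable theory given just before the theorem: a procedure consists of a preparation step (measure some fiducial $A$ and postselect on outcome $a$) followed by a measurement step (measure some $B$ and record the outcome probabilities). To prove indistinguishability, I must show that for every such (preparation, measurement) pair in $\mathcal{T}_G$, the analogous pair in $\mathcal{T}_U$ yields identical conditional outcome statistics, and vice versa. The correspondence defined in Section \ref{sec:corr} already matches the measurement sets of the two theories, so the task reduces to checking agreement of the generated probabilities $P(b|B;a,A)$.

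First, I would analyze the preparation step in isolation. In $\mathcal{T}_U$, measuring $X_i$ and postselecting on $x_i$ produces, by regularity, the pure state $|X_i=x_i)_U$, whose outcome probabilities for any other $X_j$ are given by the marginal of the distribution $p_{x_1',\ldots,x_{i-1}',x_{i+1}',\ldots,x_m'}$. In $\mathcal{T}_G$, the same postselection is ambiguous: from any input, the disturbance rule (\ref{eq:asgdit}) assigns the post-measurement state as the same $p$-weighted mixture of gdit boxes $(x_1',\ldots,x_i,\ldots,x_m')_G$ agreeing on the $i$th coordinate. By the definition of the $\mathcal{T}_G\!-\!\mathcal{T}_U$ correspondence, this mixture is operationally the representative of $|X_i=x_i)_U$, so the marginal statistics for a subsequent $X_j$ measurement coincide with those in $\mathcal{T}_U$.

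Next I would handle the measurement step. Repeatability ($B=A$) is trivial: in $\mathcal{T}_G$ deterministic boxes always return $x_i$ under re-measurement of $X_i$, and in $\mathcal{T}_U$ regularity guarantees the same. For $B=X_j$ with $j\ne i$, by linearity of probability assignment on mixtures the outcome probabilities of $X_j$ in $\mathcal{T}_G$ on the postselected gdit mixture equal the $p$-weighted marginal for $x_j'$, which is precisely the probability $p(x_j|X_j,|X_i=x_i)_U)$ in $\mathcal{T}_U$. Finally I would close the loop by considering compound procedures, arguing that after the $B$-measurement the state in either theory is again a postselected object of the same form, so the matching propagates inductively through any sequence of measurement-and-postselect cycles.

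The main obstacle I anticipate is not the one-shot matching, which is essentially built into the definition of correspondence, but ensuring that the induction on sequences genuinely closes — in particular that the prescription (\ref{eq:asgdit}) for gdit disturbance, together with whatever post-measurement update is chosen in $\mathcal{T}_U$, produces mutually compatible updated states. Since Section \ref{sec:disturb} already noted that the nonsimpliciality conditions do not uniquely fix the disturbance, I would need to remark that the theorem should be read as: for the canonical updates used in defining the correspondence, the two theories are operationally indistinguishable, and any admissible update in $\mathcal{T}_U$ consistent with its uncertainty structure is mirrored by the matching gdit update.
\end{proofsketch}
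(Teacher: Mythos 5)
Your proposal is correct and follows essentially the same route as the paper: both arguments reduce operational indistinguishability to the two-step (prepare-by-postselection, then measure) characterization and observe that the gdit disturbance distribution $p_X(z)$ produced by postselecting on $X=x$ is, by the very definition of the $\mathcal{T}_G$--$\mathcal{T}_U$ correspondence, the uncertainty distribution of the regular pure state $|X=x)_U$. Your additional induction over longer measurement sequences and the caveat about non-unique disturbance prescriptions go beyond what the paper's two-step operational characterization requires, but they do not change the core argument.
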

\begin{proof}
Let us consider  a gdit theory $\mathcal{T}_G$ of  two measurements $X$
and  $Z$ (with  straightforward generalization  to more  measurements).
Without loss of generality, suppose  we first prepare $X=x$ during the
state preparation  step.  In the  gdit theory, if the  state preceding
measurement was such that $X\ne x$, then it will be ``post-rejected'',
i.e.,  post-selected out.   Thus,  a  post-selected state  necessarily
pre-possessed  definite value  $X=x$.  Therefore, no  matter what  the
unknown pure state preceding  the measurement, the post-selected state
under  the  subsequent  measurement  of  $Z$  will  reflect  the  same
disturbance  probability $p_X(z)$,  which will  be $\frac{1}{d}$  in a
symmetric gdit theory.

By  definition of  the corresponding  regular theory,  the preparation
step  above  will produce  the  unique  regular state  $|X=x)$,  whose
uncertainty  probability   distribution  will  be  $p_X(z)$.    A  $Z$
measurement in step (2) will thus  give the outcome statistics for the
gdit theory and its corresponding regular theory. \hfill \qed
\end{proof}

The ability of observers to prepare any pure state is obviously a nice
feature and hence an important requirement for any operational theory.
Requiring  preparability is  the  reason that  a nonclassical  regular
theory  has   $\mathcal{U}>0$.   This   justifies  our   inclusion  of
uncertainty   as   a   basic   feature  of   regularity   in   Section
\ref{sec:frame}.

The   uncertainty-disturbance   shows    that   the   regular   theory
corresponding  to  a  gdit  theory  can be  considered  as  having  an
\textit{epistemic interpretation}  over ontic  states supplied  by the
gdit theory.  This observation generalizes  to Spekkens' model  for an
epistemic model \cite{Spe07}, discussed later below.

In   Figure    \ref{fig:JD},   the    outer   square    represents   a
two-input-two-output gdit  theory, with  the two  fiducial measurements
being  $X$  and   $Z$.   The  inner  square  is   the  regular  theory
corresponding  to  the  symmetric  gdit  theory.   The  circular  disk
represents a  more general regular  theory, with infinitely  many pure
states  (circular rim),  generated by  the gdit  theory considering  a
continuum  of  asymmetric  measurement  disturbances.   The  antipodal
points of  the circle will  represent eigenstates of  rotated fiducial
measurements   $A$  and   $B$,  obtained   under  suitable   continuous
transformation of $X$ and $Z$.
\begin{figure}
\includegraphics[width=6cm]{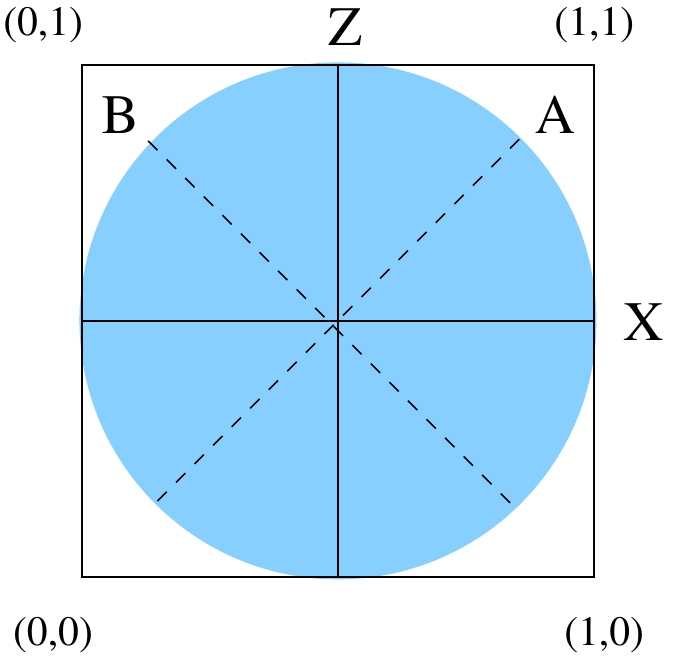}
\caption{The state space  of the two-input-two-put gdit  theory is the
  outer square. The inner square (resp., circular disk) is the regular
  theory  corresponding to  the  symmetric  (resp., assymmetric)  gdit
  theory.  The pairs $(X,Z)$ and $(A,B)$ represent equivalent fiducial
  measurements,  connected   by  a   continuous  rotation.    From  the
  perspective of $\Sigma$-ontology  (cf.  Section \ref{sec:prep}), the
  gdit   theory  is   intermediate  ontological   space  $\Sigma_\cup$
  underlying the  operational (regular) theory $\Sigma$,  which can be
  the inner square or the circle.}
\label{fig:JD}
\end{figure}

\section{Ontology of noncontextual nonclassical theories \label{sec:simplionto}}

The  preceding consequences  of  nonsimpliciality  follow from  purely
operational  considerations.   In this  Section,  we  take up  certain
ontological issues. Here our  approach is comparable with that of
\cite{CSW14,abramsky2011sheaf,silva2017graph},  which   explore  graph
theoretic  approaches to  nonclassicality  in  quantum mechanics.   We
return to this matter in Section \ref{sec:cbox}, but here we only note
that  in  contrast  to  the  three  aforementioned  works,  which  are
applicable to contextuality and nonlocality, our combinatoric approach
(as  explored in  this Section)  works also  for nonclassical  effects
arising  in   the  absence  of  contextuality   and  nonlocality,  but
associated with the nonsimpliciality of the state space.

\subsection{Underlying simplex and $\Sigma$-ontology
\label{sec:sqcup}}

The idea that an ontological model for a nonclassical theory is itself
a  classical  theory  of  some  kind  suggests  an  \textit{underlying
  simplex}  $\Sigma_\sqcup$  and  a  corresponding  \textit{underlying
  classical theory} $\mathcal{T}_\sqcup$ as natural representations of
an ontological model. Given fiducial  measurements $X, Y, Z, \cdots$ in
$\mathcal{T}$,  we posit  underlying  versions  of these  measurements,
which will also be referred to  as $X, Y, Z, \cdots$.  However, unlike
in $\mathcal{T}$, these  measurements \textit{do} in the underlying
theory peacefully
coexist.  Therefore, $\Sigma_\sqcup  \equiv \Sigma_X \uotimes \Sigma_Z
\uotimes  \cdots$.    Since  the  individual  spaces   $\Sigma_j$  are
simplexes, so is $\Sigma_\sqcup$.  So  long as an underlying JD exists
for  the properties,  such an  underlying theory  $\mathcal{T}_\sqcup$
exists. The  ontological model that  it can provide for  the overlying
operational theory  $\mathcal{T}$ is discussed below.   In particular,
the  elements  $\partial\Sigma_\sqcup$  are taken  to  constitute  the
underlying ontic states $\lambda$.

The  relationship   between  the   simplex  $\Sigma_\sqcup$   and  the
nonsimplex  $\Sigma$  of  $\mathcal{T}$, represented  by  the  mapping
$\varphi:    \Sigma_\sqcup    \rightarrow     \Sigma$,    cannot    be
linear-injective, since this would  preserve the simplex property.  It
is convenient  to think of $\varphi$  as the composition of  two maps:
$\varphi \equiv \varphi_\star \circ \varphi_\triangledown$.

The    map     $\varphi_\triangledown:    \Sigma_\sqcup    \rightarrow
\Sigma_\cup$, where $\Sigma_\cup$ is either an intermediate underlying
simplex or an intermediate underlying gdit space such that
\begin{equation}
\textrm{dim}(\Sigma_\cup) = \textrm{dim}(\Sigma).
\label{eq:cupeq}
\end{equation}
Therefore, for a nonclassical $m$-input-$n$-output regular theory with
pairwise incongruent measurements,  $\varphi_\triangledown$ reduces the
dimensionality exponentially  from $\textrm{dim}(\Sigma_\sqcup)=n^m-1$
to $\textrm{dim}(\Sigma)=m(n-1)$.   It is  convenient to call  the map
$\varphi_\triangledown$ as \textit{compression}.

Typically, map $\varphi_\triangledown$ does not introduce uncertainty.
This   introduction    happens   via    the   partial    function   of
\textit{crumpling},  $\varphi_\star: \Sigma_\cup  \rightarrow \Sigma$,
which is such that $\varphi_\star^{-1}:\Sigma \rightarrow \Sigma_\cup$
is an embedding.  We refer  to $\varphi_\star$ as a \textit{crumpling}
map  because  it introduces  vertices  and  faces.   We call  the  map
$\varphi$ and  the corresponding  ontological model as  g-type (resp.,
s-type) if space $\Sigma_\cup$ is a gdit space (resp., simplex).

The  term  \textit{$\Sigma$-ontology}  will   refer  to  this  general
procedure for  constructing an ontological  model on the basis  of the
underlying  simplex  $\Sigma_\sqcup$  and the  intermediate  underlier
$\Sigma_\cup$. An example  is given by Figure  \ref{fig:JD}, where the
regular theory is the inner square. The the outer square represents an
intermediate  underlying theory  $\Sigma_\cup$, while  the tetrahedron
(not shown) is the underlying simplex $\Sigma_\sqcup$.

\subsubsection{G-type ontological model}

Here the intermediate space  $\Sigma_\cup$ is the $m$-input-$n$-output
gdit theory, which one may  call the \textit{underlying gdit theory}.
It preserves  the $n^m$  extreme points  of $\Sigma_\sqcup$  such that
$\varphi_\triangledown:          \partial\Sigma_\sqcup         \mapsto
\partial\Sigma_\cup$ is a one-to-one-correspondence.  If the points in
$\Sigma_\sqcup$  are represented  in a  barycentric coordinate  system
$\mathcal{B}$, then  those in $\Sigma_\cup$  can be expressed  using a
generalized barycentric  coordinate system $\mathcal{G}$  derived from
$\mathcal{B}$   using    $\varphi_\triangledown$.    Owing    to   the
nonsimpliciality   of  the   gdit   theory,   the  representation   in
$\mathcal{G}$ in general won't be unique.

\begin{example}
Suppose there are two fiducial measurements  $X, Z$ taking values $x, z
\in  \{0,1\}$ in  the  operational theory  characterized  by the  pure
states  in   Eq.   (\ref{eq:uneq}).    Here  the   underlying  simplex
$\Sigma_\sqcup$ is the 3-simplex generated by four vertices $\lambda_j
\equiv  x  \otimes  z$  ($j \in  \{1,2,3,4\}$).   In  the  barycentric
coordinate   system,  the   $\lambda_j$'s  are   the  states   in  Eq.
(\ref{eq:ontixtate}).

Employing the  generalized barycentric coordinates  for $\Sigma_\cup$,
we can express the action of $\varphi_\triangledown$ on pure states of
$\Sigma_\sqcup$  by $\varphi_\triangledown(x  \otimes z)  = (x,z)  \in
\mathbb{R}^2$.  Given  a point  in  $\Sigma_\sqcup$  described by  the
barycentric coordinates $(a_1,a_2,a_3,a_4)$,  where $a_j\ge0$, $\sum_j
a_j=1$ , we have
\begin{align}
\varphi_\triangledown[(a_1&,a_2,a_3,a_4)] = 
\sum_{j=1}^4 a_j \varphi_\triangledown(\lambda_j)
\nonumber\\
&= a_1(0,0) + a_2(0,1)+a_3(1,0)+a_4(1,1) \nonumber\\
&= (a_3+a_4,a_2+a_4),\nonumber\\
&\equiv (a_1+a_2,a_3+a_4 \mid a_1+a_3,a_2+a_4)
\label{eq:GBC}
\end{align}
where  $(0,0), (0,1),  (1,0)$  and  $(1,1)$ are  the  vertices of  the
underlying  gdit  theory.   Note  that  distinct  interior  points  in
$\Sigma_\sqcup$ may map to the  same point in $\Sigma_\cup$, since the
vertices    $\varphi_\triangledown(\lambda_j)$   are    not   linearly
independent.     For   instance,    the   distinct    ontic   mixtures
$(\half,0,0,\half)$ and $(0,\half,\half,0)$  in the underlying simplex
map  to the  same  mixture $(\half,\half)  \equiv  (\half, \half  \mid
\half, \half)$ in the gdit theory. \hfill \qed
\end{example}

In  Figure \ref{fig:JD},  space  $\Sigma_\cup$ is  represented by  the
outer square and operational space $\Sigma$ by the circle or the inner
square.   Map  $\varphi_\star^{-1}$ embeds  the  circle  in the  outer
square  $\Sigma_\cup$.   The square  is  obtained  by compressing  set
$\Sigma_\sqcup$ (a  3-simplex, not  shown in the  figure) via  the map
$\varphi_\triangledown$.   This  compresses  the  tetrahedron  to  the
square.

\begin{example}
In our approach,  Spekkens' toy theory \cite{Spe07}  has the following
operational  description.    There  are  three   pairwise  incongruent
measurements $X, Y, Z$, with $\Sigma$ given as the convex hull of:
\begin{align}
\psi_X^+ &\equiv (1,0\mid \half, \half \mid \half, \half) \nonumber\\
\psi_X^- &\equiv (0,1\mid \half, \half \mid \half, \half) \nonumber\\
\psi_Y^+ &\equiv (\half, \half \mid 1,0 \mid \half, \half) \nonumber\\
\psi_Y^- &\equiv (\half, \half \mid 0, 1 \mid \half, \half) \nonumber\\
\psi_Z^+ &\equiv (\half, \half \mid \half, \half \mid 1,0) \nonumber\\
\psi_Z^+ &\equiv (\half, \half \mid \half, \half \mid 0,1).
\label{eq:spekkens}
\end{align}
The nonsimpliciality of $\Sigma$ is a consequence of dependences among
the extreme states:
\begin{align}
 (\half,\half\mid \half,\half \mid \half,
\half) &= \half(\psi_X^+   +   \psi_X^-) \nonumber\\
 &=  \half(\psi_Y^+   +   \psi_Y^-)\nonumber\\
   &=
\half(\psi_Z^+ + \psi_Z^-).
\label{eq:speksimplex}
\end{align}
Therefore  the various  nonclassical features  noted in  Spekkens' toy
theory are  seen to  follow essentially  from the  nonsimpliciality of
$\Sigma$ in that theory.

The  underlying simplex is
\begin{equation}
\Sigma_\sqcup = \Sigma_X \uotimes  \Sigma_Y
\uotimes \Sigma_Z,
\label{eq:mps}
\end{equation}
of dimensionality $\textrm{dim}(\Sigma_\sqcup)=7$.   Since for the toy
theory  $D=3$,   therefore  $\Sigma_\cup$  is   the  three-dimensional
3-input-2-output  gdit theory,  with vertices  given by  the 8  states
$\gamma_{4^j+2^k+l+1}  \equiv (j,k,l)$  in Cartesian  coordinates, and
$j,k,l   \in    \{0,1\}$.    The    compression   map   is    given   by
$\varphi_\triangledown(j \otimes  k \otimes  l) = (j,k,l)$,  where the
ontic elements  $\lambda \equiv  (j \otimes  k \otimes  l)$ are  the 8
extreme  points of  the 7-simplex  $\Sigma_\sqcup$, and  distinct from
Spekkens' ontic set  of 4 points (which are the  extreme points of the
$\Sigma_\cup$ in the s-type ontology, discussed below).

In the  g-type ontology,  the Spekkens toy  theory corresponds  to the
octahedron formed  as the convex hull  of the six face-centers  of the
3-input-2-output gdit theory.  The  ontic representation of the states
(\ref{eq:spekkens}) in terms of the intermediate gdit theory is:
\begin{align}
\mu_X^+ &= \frac{1}{4}(\gamma_1+\gamma_2+\gamma_3+\gamma_4) \nonumber\\
\mu_X^- &= \frac{1}{4}(\gamma_5+\gamma_6+\gamma_7+\gamma_8) \nonumber\\
\mu_Y^+ &= \frac{1}{4}(\gamma_1+\gamma_2+\gamma_5+\gamma_6) \nonumber\\
\mu_Y^- &= \frac{1}{4}(\gamma_3+\gamma_4+\gamma_7+\gamma_8) \nonumber\\
\mu_Z^+ &= \frac{1}{4}(\gamma_1+\gamma_3+\gamma_5+\gamma_7) \nonumber\\
\mu_Z^- &= \frac{1}{4}(\gamma_2+\gamma_4+\gamma_6+\gamma_8),
\label{eq:spekkensmu}
\end{align}
which are  just the  generalized barycentric  coordinates in  the gdit
theory.
\end{example}
A geometrization  of Spekkens' toy  theory using precisely  four ontic
points, based on s-type $\Sigma$-ontology, is discussed below.

\subsubsection{S-type ontological model}

In this type  of model, the intermediate space  $\Sigma_\cup$ is taken
to  be the  $D$-simplex,  where $D\equiv\textrm{dim}(\Sigma)$.   Thus,
$\varphi_\triangledown$ is  a function  that preserves  the simplicial
property.     The    crumpling    map    $\varphi_\star$    introduces
nonsimpliciality,  in  addition  to  uncertainty.  This  idea  can  be
illustrated  by  giving  an  s-type model  for  Spekkens'  toy  theory
\cite{Spe07}, in contrast to the g-type given above.

\begin{example}
For the Spekkens toy theory, whose  $\Sigma$ is the convex hull of the
states  (\ref{eq:spekkens}),   the  intermediate   underlying  simplex
$\Sigma_\cup$  is  the  3-simplex,   or  tetrahedron.   The  compression
function $\varphi_\triangledown$  maps the  7-simplex $\Sigma_\sqcup$,
defined in (\ref{eq:mps}), to this.  Here $\Sigma_\cup$ is taken to be
the   convex   hull   of    any   four   convex   independent   points
$\gamma_j^\prime$  in  $\mathbb{R}^3$.   In particular,  we  can  take
$\gamma_1^\prime  \equiv  (0,0,0),   \gamma_2^\prime  \equiv  (0,1,1),
\gamma_3^\prime \equiv (1,0,1)$ and $\gamma_4^\prime \equiv (1,1,0)$.

The  state space  $\Sigma$ in  Spekkens toy  theory is  the octahedron
formed  as the  convex  hull  of the  six  face-centers  of the  above
3-simplex,    which   serves    as   $\Sigma_\cup$,    with   vertices
$\{\gamma^\prime_j\}$  ($j\in\{1,2,3,4\}$).  The  ontic representation
in terms of the intermediate underlying simplex is:
\begin{align}
\mu_X^{+\prime} &= \frac{1}{2}(\gamma_1^\prime+\gamma_2^\prime) \nonumber\\
\mu_X^{-\prime} &= \frac{1}{2}(\gamma_3^\prime+\gamma_4^\prime) \nonumber\\
\mu_Y^{+\prime} &= \frac{1}{2}(\gamma_1^\prime+\gamma_3^\prime) \nonumber\\
\mu_Y^{-\prime} &= \frac{1}{2}(\gamma_2^\prime+\gamma_4^\prime) \nonumber\\
\mu_Z^{+\prime} &= \frac{1}{2}(\gamma_1^\prime+\gamma_4^\prime) \nonumber\\
\mu_Z^{-\prime} &= \frac{1}{4}(\gamma_2^\prime+\gamma_3^\prime),
\label{eq:spekkensmu0}
\end{align}
which   may   be   contrasted    with   the   g-type   representation,
Eq. (\ref{eq:spekkensmu}). 
\hfill \qed
\end{example}
This  geometric  containment of  state  space  $\Sigma$ in  the  above
3-simplex for Spekkens' toy theory's was first shown in \cite{DGV14}.

\subsection{Impossible universal operations\label{sec:impossible}}
 
We  define a  \textit{coherent transformation}  $U$ in  an operational
$\mathcal{T}$ as  one that reversibly  maps any pure state  to another
pure state.   An instance of  a coherent  operation for a  theory with
$N=2$ is the inverter, which transforms  any state to another which is
orthogonal  in the  sense of  being deterministically  distinguishable
from the initial state. Thus, the inverter is fully specified by
\begin{eqnarray}
|\psi_X^+) &\longleftrightarrow& |\psi_X^-), \nonumber\\
|\psi_Z^+) &\longleftrightarrow& |\psi_Z^-),
\label{eq:invertop}
\end{eqnarray}
for the theory determined by  pure states (\ref{eq:uneq}).  In quantum
mechanics,  it  is  known  that there  exists  no  universal  inverter
\cite{RCC+01}.   Spekkens   showed  \cite{Spe07}  that   an  analogous
non-invertibility  result  exists  is  his model.   His  proof,  which
employs an ontological argument, proceeds along the following lines.

In the  ontic  representation  (\ref{eq:spekkensmu0}) of  the
states (\ref{eq:spekkens}), to implement $\psi^+_X \longleftrightarrow
\psi_X^-$ and $\psi^+_Y \longleftrightarrow \psi_Y^-$, it is seen that
the requisite ontic transformation is
\begin{align}
\gamma_1^\prime  &\longleftrightarrow   \gamma^\prime_4 \nonumber\\
\gamma_2^\prime &\longleftrightarrow \gamma_3^\prime.
\label{eq:ontictrans}
\end{align}
Applying this to $\mu^\pm_Z$, it is clear that this the transformation
(\ref{eq:ontictrans})  doesn't   work,  implying  that   the  coherent
operation (\ref{eq:invertop}) is disallowed.

This observation prompts  the question of whether  an analogous result
can  be  produced  for   any  operational  theory  $\mathcal{T}$  with
non-simplicial space $\Sigma$ in  the $\Sigma$-ontology.  Here, in the
spirit  of  Spekkens'  argument,  it  may  be  assumed  that  coherent
transformations in the operational theory should have a representation
in  terms  of  reversible   transformations  among  the  ontic  states
$\lambda$  in the  ontological  model $\mathcal{T}_\sqcup$  or in  the
intermediate underlying  model in the g-type  ontology. Then universal
coherent operations can be shown not to exist.

To see this we note that  Spekkens' proof, in our terminology, employs
the  s-type ontology.   However, it  is readily  verified that  in the
g-type ontology, the following  ontic transformation indeed implements
the universal inverter (\ref{eq:invertop}):
\begin{align}
\gamma_1  &\longleftrightarrow   \gamma_8 \nonumber\\
\gamma_2 &\longleftrightarrow \gamma_7 \nonumber \\
\gamma_3 &\longleftrightarrow \gamma_6 \nonumber \\
\gamma_4 &\longleftrightarrow \gamma_5.
\label{eq:ontictrans0}
\end{align}
This shows that  this nonclassicality is dependent  on the ontological
model.  From the perspective of  interpretation, this result says that
if certain coherent operations are not observed, then it restricts the
class of possible underlying classical explanations.

Mathematically,  the result  shows that  by increasing  the number  of
ontic  states for  the same  operational  theory, we  can create  more
``space for maneurvre''.  Given a discrete $m$-input-$n$-output theory
$\mathcal{T}$, with space $\Sigma$, $|\partial\Sigma|=mn$ pure states,
and  dimension  $D_\mathcal{T}=m(n-1)$,  we have  for  the  underlying
simplex,  $D_\sqcup=n^m-1$   and  $|\partial\Sigma_\sqcup|=n^m$.   The
cardinality   $|\partial\Sigma|$   is   exponentially   smaller   than
$|\partial\Sigma_\sqcup|$. More  generally, $\partial\Sigma$  could be
arbitrarily enlarged  within a  fixed underlying  $\Sigma_\sqcup$. The
theory represented by the circular disk in Figure \ref{fig:JD} is such
an example.

There are  in all $|\partial\Sigma|!$ permutations  and hence distinct
coherent transformations in $\mathcal{T}$.   In the underlying theory,
there   are   $|\partial\Sigma_\sqcup|!$   ontic   permutations.    If
$|\partial\Sigma|!     >    |\partial\Sigma_\sqcup|!$,    and    hence
$|\partial\Sigma|  >  |\partial\Sigma_\sqcup|$,  then one  there  will
coherent operations that are  impossible in the $\Sigma$-ontology, but
whether such coherent operations are physically interesting is another
matter.

As an illustration,  we can show how to include  arbitrarily many pure
points   through  the   function  $\varphi_\star$,   to  create   such
impossibilities.   Given ontic  states (\ref{eq:ontixtate}),  from Eq.
(\ref{eq:GBC})  we  obtain  the ontic  probability  distributions  for
states in (\ref{eq:uneq}) to be:
\begin{eqnarray}
\mu_X^+   &=& (\half,\half,0,0), \nonumber \\
\mu_X^-   &=& (0,0,\half,\half),\nonumber \\
\mu_Z^+   &=& (\half,0,\half,0), \nonumber \\
\mu_Z^-   &=& (0,\half,0,\half).
\label{eq:dist}
\end{eqnarray}
The operational  theory characterized by  space $\Sigma$ which  is the
convex hull  of the states  (\ref{eq:uneq}) is a  non-simplex 
geometrically contained 
within the underlying gdit theory $\Sigma_\cup$.

This theory can be described as a nonsimplex in the convex hull of the
classical states (\ref{eq:ontixtate}).   From Eq.  (\ref{eq:dist}), we
see that the ontic operations
\begin{eqnarray}
\gamma_1 &\longleftrightarrow& \gamma_4, \nonumber\\
\gamma_2 &\longleftrightarrow& \gamma_3,
\label{eq:invertont}
\end{eqnarray}
indeed implement inversion (\ref{eq:invertop}).  

Suppose we add to $\mathcal{T}$ any state quantumly realizable through
a rotation with  angle $\theta$ in the $X-Z$  (equatorial) plane. This
is seen to be:
\begin{align}
|\psi_Y^+) &= (\delta, 1-\delta~|~\upsilon,1-\upsilon) \nonumber\\
|\psi_Y^-) &= (1-\delta, \delta~|~1-\upsilon,\upsilon)
\label{eq:upsi}
\end{align}
where   
\begin{align}
\delta   &\equiv  \frac{1}{2}(1+\sin\theta) \nonumber\\
\upsilon   &\equiv \cos^2(\frac{\theta}{2}).
\label{eq:delups}
\end{align}
These can be derived from the corresponding ontic distributions:
\begin{align}
\mu^+_Y &= (\delta\upsilon, \delta(1-\upsilon), 
(1-\delta)\upsilon, (1-\delta)(1-\upsilon)), \nonumber \\
\mu^-_Y &= ((1-\delta)(1-\upsilon), (1-\delta)\upsilon, 
\delta(1-\upsilon), \delta\upsilon),
\label{eq:laps}
\end{align}
where $0  \le \delta,  \upsilon \le  1$.  Clearly,  if the  added pure
points    have   the    pattern    (\ref{eq:laps}),   then    $\mu^+_Y
\longleftrightarrow     \mu^-_Y$     under    ontic     transformation
(\ref{eq:invertont}), making all  pure states are invertible. 

However, consider the coherent transformation defined by:
\begin{subequations}
\begin{eqnarray}
\psi_X^\pm &\longleftrightarrow \psi_Z^\pm \label{eq:arbcoha}\\
\psi_Y^+ &\longleftrightarrow \psi_Y^-. \label{eq:arbcohb}
\end{eqnarray}
\label{eq:arbcoh}
\end{subequations}
Eq. (\ref{eq:arbcoha}) requires $\gamma_1$  to transform to $\gamma_2$
or   $\gamma_3$,   neither   of    which   is   congruent   with   the
Eq.  (\ref{eq:arbcohb}), as  seen  in  Eq. (\ref{eq:laps}).   However,
clearly the operation (\ref{eq:arbcoh}) is rather articifical.

On the other hand, the existence of fewer pure states in the overlying
theory   $\mathcal{T}$  than   extreme   points   in  the   underlying
$\Sigma_\sqcup$  does  not  guarantee   that  all  universal  coherent
operations are possible.  To see this, suppose we replace the last two
states in Eq. (\ref{eq:spekkens}) by:
\begin{align}
|\psi_Z^+) &\equiv (\frac{1}{3}, \frac{2}{3} \mid \half, \half \mid 1,0) \nonumber\\
|\psi_Z^-) &\equiv (\frac{1}{3}, \frac{2}{3} \mid \half, \half \mid 0,1).
\label{eq:spekkens0}
\end{align}
In this case, the resulting theory remains nonsimplicial,
but the $Z$ eigenstates satisfy
$\half(\psi_Z^+   +   \psi_Z^-) = (\frac{1}{3},\frac{2}{3}\mid \half,\half
\mid \half, \half)$.
It follows that
\begin{align}
\mu_Z^+ &= \frac{1}{6}(\gamma_1+\gamma_3)+
\frac{2}{6}(\gamma_5+\gamma_7) \nonumber\\
\mu_Z^- &= \frac{1}{6}(\gamma_2+\gamma_4)+
\frac{2}{6}(\gamma_6+\gamma_8),
\label{eq:spekkensmu1}
\end{align}
instead of  the last  two equations  in (\ref{eq:spekkensmu}).   It is
observed that in  this case there is no ontic  permutation of the kind
(\ref{eq:ontictrans0}), implying the lack  of existence of a universal
inverter.

\subsection{Preparation contextuality
and nonsimpliciality \label{sec:prep}}

Suppose there  are different  preparations $P, P^\prime,  \cdots$ that
lead  to  the  same  outcome  statistics  in  the  operational  theory
$\mathcal{T}$  such  that  for   all  measurements  $M$,  $p(k|M,P)  =
p(k|M,P^\prime)  = \cdots$.   Thus, operationally,  these preparations
are indistinguishable  or equivalent.  Consider the  ontological model
for   these  preparations,   given  by   the  underlying   probability
distributions  $\mu_P(\lambda), \mu_{P^\prime}(\lambda),  \cdots$ over
ontic states $\lambda$.   If $\mu_P(\lambda) = \mu_{P^\prime}(\lambda)
= \cdots$,  i.e., they are also  ontologically indistinguishable, then
theory $\mathcal{T}$  is said to be  preparation non-contextual; else,
the theory is preparation contextual \cite{Spe05}.

Consider the regular theory determined by pure states (\ref{eq:uneq}).
It satisfies the condition (\ref{eq:psiX=Z}), in which the LHS and RHS
represent distinct  mixtures.  The  pure states  are described  by the
ontological  probability  distributions  $\mu_X^\pm$  and  $\mu_Z^\pm$
given  by  (\ref{eq:dist}).   Since  these  ontological  distributions
satisfy
\begin{align}
\frac{1}{2}(\mu_X^+(\lambda)  +   \mu_X^-(\lambda)) &=   
\frac{1}{2}(\mu_Z^+(\lambda)  +
\mu_Z^-(\lambda)),\nonumber\\
&= (\frac{1}{4},\frac{1}{4},\frac{1}{4},\frac{1}{4})\nonumber\\
&\equiv \mathcal{Y},
\label{eq:X=Z}
\end{align}
the  corresponding  ontic  mixtures  also  are  indistinguishable  and
therefore  preparation contextuality  cannot be  demonstrated in  this
case.

If the  $Y$-outcomes are  taken to be  given by  Eq.  (\ref{eq:upsi}),
then we have from Eqs.  (\ref{eq:laps}) and (\ref{eq:delups})
\begin{align}
\frac{1}{2}(\mu_Y^+(\lambda)  +   \mu_Y^-(\lambda))
&= \frac{1}{4}(1+\xi,1-\xi,1-\xi,1+\xi)\nonumber\\
&\ne \mathcal{Y},
\label{eq:Y!=Z0}
\end{align}
where $\xi \equiv \frac{\sin(2\theta)}{2}$,  giving a similar argument
for   preparation   contextuality.    Thus,  only   cases   $\theta=0$
(corresponding  to $X$)  and $\theta=\frac{\pi}{2}$  (corresponding to
$Z$) are the two exceptional, preparation noncontextual cases.

The above counterexamples to  preparation contextuality illustrate the
following general result.
\begin{thm}
A  nonclassical   theory  $\mathcal{T}$  is  in   general  preparation
contextual in the $\Sigma$-ontology.
\label{thm:prepcont}
\end{thm}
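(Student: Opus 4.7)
The plan is to reduce preparation contextuality to a geometric mismatch between the operational indistinguishability of mixtures (expressed by the nonsimpliciality conditions) and their ontic indistinguishability (equality of distributions over $\lambda$). I would begin with a generic nonsimpliciality relation $\sum_j q_j |\psi_X^j) = \sum_k r_k |\psi_Z^k)$ of the form (\ref{eq:constraint}), which by construction produces operationally indistinguishable preparations. In the $\Sigma$-ontology each pure state $\psi_X^j$ carries a distribution $\mu_X^j$ over the vertices of $\Sigma_\sqcup$ (or of $\Sigma_\cup$, in the g-type case), and preparation noncontextuality would demand the linear identity $\sum_j q_j \mu_X^j = \sum_k r_k \mu_Z^k$ to hold already at the ontic level.

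Next I would argue that this ontic identity is strictly stronger than its operational counterpart because $\varphi = \varphi_\star \circ \varphi_\triangledown$ is not injective: $\dim(\Sigma_\sqcup)$ exceeds $\dim(\Sigma)$ (exponentially, in a nonclassical regular theory with pairwise incongruent fiducial measurements), so the same operational mixture admits many distinct ontic lifts. The compression map $\varphi_\triangledown$ precisely discards the degrees of freedom that would force the two sides to coincide at the ontic level. Consequently, any two operationally equivalent mixtures whose ontic lifts are chosen independently will generically differ as distributions over $\lambda$, which is exactly preparation contextuality.

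I would then concretise the claim using the explicit construction already laid out in (\ref{eq:laps})--(\ref{eq:Y!=Z0}). Introducing a third measurement $Y$, obtained by a rotation of $X$ into $Z$ in the equatorial plane, produces ontic distributions $\mu_Y^{\pm}$ whose symmetric mixture equals $\frac{1}{4}(1+\xi,1-\xi,1-\xi,1+\xi)$, distinct from $\mathcal{Y}$ except at the canonical angles $\theta \in \{0,\pi/2\}$. This exhibits an operational equivalence (the symmetric $Y$-mixture against the symmetric $X$- or $Z$-mixture) that is not preserved at the ontic level, so the $\Sigma$-ontology is preparation contextual on this family. The special cases where ontic distributions happen to align, such as the pair $(X,Z)$ under the symmetric gdit correspondence of (\ref{eq:X=Z}), are then visible as fragile coincidences that enlarging the measurement set or switching to an asymmetric disturbance prescription generically breaks.

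The main obstacle I anticipate is making precise the qualifier ``in general''. One route is to parametrise the admissible ontic assignments consistent with $\varphi$ and show that the locus on which every ontic lift of every nonsimpliciality condition holds has strictly smaller dimension than the full space of lifts; this turns genericity into a codimension statement. A simpler and, I think, sufficient route is to settle for exhibiting at least one nonsimpliciality condition whose lift fails, as in the $Y$-measurement counterexample, since one such failure already certifies preparation contextuality per the definition of \cite{Spe05}. I would favour the latter, as it matches the example-driven style of the preceding sections and avoids introducing measure-theoretic machinery that is not needed elsewhere in the paper.
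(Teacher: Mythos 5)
Your proposal follows essentially the same route as the paper's proof: both take a nonsimpliciality condition as the source of operationally equivalent preparations, lift the two sides to the underlying simplex, and observe that the linear independence of the ontic vertices breaks the equality there, with the $Y$-rotation family of Eqs.~(\ref{eq:laps})--(\ref{eq:Y!=Z0}) certifying the ``in general'' qualifier and the $X/Z$ pair of Eq.~(\ref{eq:X=Z}) appearing as the exceptional noncontextual case. The one place where your argument is looser than the paper's is the middle step: you say that operationally equivalent mixtures ``whose ontic lifts are chosen independently will generically differ,'' but in the $\Sigma$-ontology the lift of each pure state is canonical --- each gdit vertex $g_j$ has a unique preimage $\varphi_\triangledown^{-1}(g_j)$ in $\Sigma_\sqcup$ --- so there is no freedom of choice to be generic over, and indeed the canonical lifts of the two sides of (\ref{eq:psiX=Z}) \emph{do} coincide, per (\ref{eq:X=Z}). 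The paper instead parametrises the two operational mixtures as convex combinations $f\left[\sum_j g_j\right] + (1-f)\left[\sum_k g_k\right]$ and $g\left[\sum_j g_j\right]+(1-g)\left[\sum_k g_k\right]$ of the two sides of the underlying gdit nonsimpliciality condition (\ref{eq:gnonsimplex}), and identifies $f \ne g$ --- i.e., the operational equivalence genuinely ``using'' the gdit relation --- as the precise condition under which the lifted distributions satisfy $\varphi_\triangledown^{-1}(\Psi) \ne \varphi_\triangledown^{-1}(\Phi)$. Your fallback of exhibiting a single failing lift is sound and matches the example-driven treatment surrounding the theorem, so the proposal stands for the instance; but to get the general statement you should replace the genericity-over-lifts claim with the $f\ne g$ criterion, or with a codimension count over the choice of measurements rather than of lifts.
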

\begin{proof} 
Quite generally, let $\mathcal{T}$ be a regular theory embedded in the
gdit  theory $\mathcal{T}_\cup$,  which  is the  convex  hull of  pure
points $g_j$ ($j \in \{0,1,2,\cdots,N\})$.

A nonsimpliciality condition in $\mathcal{T}_\cup$ has the form:
\begin{equation}
\sum_{j=0}^M g_j = \sum_{k=M+1}^{N} g_k,
\label{eq:gnonsimplex}
\end{equation}
where $M+1<N$ and the LHS and RHS represent preparation-like contexts,
though  not proper  preparation  contexts because  gdit theories  lack
unambiguous preparability of pure states (cf. Section \ref{sec:gdit}).

Although  map   $\varphi_\triangledown$  is  non-injective,   the  map
$\varphi_\triangledown^{-1}$ is  one-to-one when  applied to  the pure
points  $g_j$'s  and  thus  well defined  for  the  domain  $\{g_j\}$.
However, upon inverting  the individual $g_j$'s in the LHS  and RHS of
Eq.  (\ref{eq:gnonsimplex}), we find by virtue of the simpliciality of
the underlying simplex $\Sigma_\sqcup$, that
\begin{equation}
\sum_{j=0}^M       \varphi_\triangledown^{-1}(g_j)      \ne       \sum_{k=M+1}^{N}
\varphi_\triangledown^{-1}(g_k),
\label{eq:ussimplex}
\end{equation}
which   is  a   manifestation  of   the  fact   that  $\varphi_\triangledown$   is
non-injective.     Inequality    (\ref{eq:ussimplex})     implies    a
preparation-like   contextuality,  since   it  demonstrates   the  two
preparation-like  contexts  in  the gdit  theory  are  distinguishable
ontologically.

Now consider  the nonsimpliciality  condition in the regular theory
$\mathcal{T}$:
\begin{equation}
\sum_j \psi_j = \sum_k \phi_k,
\label{eq:Tnonsimplex}
\end{equation}
where $\psi_j$  and $\phi_k$  are possibly un-normalized  basis states
from two different bases.  Eq. (\ref{eq:Tnonsimplex}) would be trivial
if both the LHS and RHS  independently sum to same quantity, typically
$\sum_{j=0}^{N} g_j$,  and thus (\ref{eq:Tnonsimplex}) would  not make
use of  the nonsimpliciality (\ref{eq:gnonsimplex}) of  the underlying
gdit theory.

Quite generally, the condition (\ref{eq:Tnonsimplex}) can be satisfied
by the requirement:
\begin{align}
\sum_i \psi_i &= f\left[\sum_{j=0}^M g_j\right] + 
(1-f)\left[\sum_{k=M+1}^{N} g_k\right] \nonumber\\
\sum_i \phi_i &= g\left[\sum_{j=0}^M g_j\right] + 
(1-g)\left[\sum_{k=M+1}^{N} g_k\right],
\label{eq:ek}
\end{align}
where $0 \le  f,g \le 1$ and  it is not necessary that  $f=g$. Then we
derive  (\ref{eq:Tnonsimplex}) nontrivially  because in  this case  we
must  make  use  of  Eq.   (\ref{eq:gnonsimplex}).  The  trivial  case
corresponds to $f=g$.

Mapping  the two  LHS in  (\ref{eq:ek}) to  the underlying  simplex by
applying $\varphi^{-1}_\triangledown$  to the $g_j$'s we  obtain the corresponding
mixtures in the underlying simplex $\Sigma_\sqcup$:
\begin{align}
\varphi_\triangledown^{-1}: \sum_i \psi_i &\mapsto 
f\left[\sum_{j=0}^M \varphi_\triangledown^{-1}(g_j)\right] + (1-f)\left[\sum_{k=M+1}^{M+N} 
\varphi_\triangledown^{-1}(g_k)\right]  \nonumber\\
& \equiv \varphi^{-1}_\triangledown(\Psi) \nonumber\\
\varphi_\triangledown^{-1}: \sum_i \phi_i &\mapsto 
g\left[\sum_{j=0}^M \varphi_\triangledown^{-1}(g_j)\right] + (1-g)\left[\sum_{k=M+1}^{M+N} 
\varphi_\triangledown^{-1}(g_k)\right] \nonumber\\
&\equiv \varphi^{-1}_\triangledown(\Phi).
\label{eq:ekh}
\end{align}
However, in view of Eq. (\ref{eq:ussimplex}), it follows that
\begin{equation}
\varphi^{-1}_\triangledown(\Psi) \ne \varphi^{-1}_\triangledown(\Phi), 
\end{equation}
entailing that two distinct underlying  states map to the same mixture
in  the  overlying  theory $\mathcal{T}$,  which  implies  preparation
contextuality.  \hfill \qed
\end{proof}

As   a   simple   illustration   of   preparation   contextuality   in
$\Sigma$-ontology, note that in the  case of the 2-input-2-output gdit
theory, the  nonsimpliciality condition among the  four extreme points
$(j,k)$, where $j,k\in\{0,1\}$, is given by:
\begin{equation}
\half[(0,0) + (1,1)] = \half[(0,1) + (1,0)].
\end{equation}
Under $\varphi_\triangledown^{-1}$, the LHS maps to the mixture $(1,0,0,1)$ in the
underlying   simplex  $\Sigma_\sqcup$,   whilst   the   RHS  maps   to
$(0,1,1,0)$. 

According  to   Theorem  \ref{thm:prepcont},  we   obtain  preparation
contextuality  for  bases   in  which  the  uniform   mixture  of  the
eigenstates should  correspond to an ontic  mixture in $\Sigma_\sqcup$
of  the form  $(a,b,b,a)$ where  $2(a +  b)=1$ while  $a\ne b$.   This
explains why we  fail to obtain a proof  for preparation contextuality
using   the    measurements   $X$    and   $Z$,   as    determined   by
Eq. (\ref{eq:uneq}), in view of (\ref{eq:X=Z}).  On the other hand, by
a  similar  argument, any  two  measurements  $Y$ parametrized  by  two
distinct    values    of    $\theta    \in    [0,\frac{\pi}{2})$    in
  Eq. (\ref{eq:laps}) yield an  argument for preparation contextuality
  by virtue of Eq.  (\ref{eq:Y!=Z0}).   

All  the above nonclassical  features, whether with  reference to
the ontology or not, require only measurement incongruence, without
reference to the question of  the transitivity of congruence or its
extendability to higher orders.  These  features correspond to what we
call as  base level  nonclassicality.  In  the following  sections, we
shall consider  nonclassical phenomena  arising when  congruence is
not  transitive or  extendable  to higher  orders,  and correspond  to
so-called ``higher-level'' nonclassicality.

\section{From incongruence to contextuality \label{sec:JD}}

If   pairwise   congruence  is   transitive   in   a  regular   theory
$\mathcal{T}$,  then   the  above   theorems  (discussed   in  Section
\ref{sec:conseq}) suffice  to characterize the nonclassicality  of the
theory.  Further, the geometro-ontological models discussed previously
(in Section \ref{sec:simplionto}) also apply.

However,  if  pairwise  congruence   is  intransitive  in  general  in
$\mathcal{T}$,  then   any  definite  value  assignment   to  all  the
measurements in question can't  be context-independent.  Therefore, the
above  ontological models,  which presume  a JD  for all  measurements,
cannot  be applied  without  suitable generalization.

Now assume that  congruence is extendable, whereby if $A$  and $B$ are
congruent and so are  $B$ and $C$, then it is assumed  that $A, B$ and
$C$ are jointly  measruable, and so on for any  number of measurements.
Later we will discuss examples of how extendability can break down.

\begin{thm}
If a regular  operational theory $\mathcal{T}$ in  which congruence is
transitive and  extendable, then  all states  in $\mathcal{T}$  have a
joint distribution (JD) in  an outcome-deterministic underlying model.
However, the converse is not true (i.e., intransitivity does not imply
lack of JD).
\label{thm:nary}
\end{thm}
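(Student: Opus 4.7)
The plan is to split the theorem into its two independent claims and handle them with quite different techniques. For the forward direction (transitivity plus extendability $\Rightarrow$ existence of a JD in an outcome-deterministic underlying model), I would leverage Theorem \ref{thm:noJO} repeatedly. Transitivity of congruence partitions the fiducial measurements $\{X_1,\ldots,X_m\}$ into equivalence classes under pairwise congruence; extendability then promotes each equivalence class into a jointly measurable block, so within each block there is a single sharp joint measurement whose outcomes form a classical sample space. The whole measurement family thus decomposes into jointly measurable blocks that are pairwise incongruent with one another.

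Next, I would build the outcome-deterministic underlying model along the lines of Section \ref{sec:classicality}. Within each block $\mathcal{B}_k$ of jointly measurable measurements the associated state space is a simplex $\Sigma_{\mathcal{B}_k}$ (by Theorem \ref{thm:noJO} iterated), so its extreme points are linearly independent value assignments and furnish the ontic states of that block. Across blocks one takes the convex direct product, i.e.\ the underlying simplex
\begin{equation}
\Sigma_\sqcup = \Sigma_{\mathcal{B}_1}\uotimes\Sigma_{\mathcal{B}_2}\uotimes\cdots\uotimes\Sigma_{\mathcal{B}_K},
\end{equation}
whose vertices $\lambda$ specify a definite outcome for \emph{every} fiducial measurement simultaneously. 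Any operational state $\psi\in\Sigma$ can then be represented by the probability distribution $\mu_\psi(\lambda)$ obtained by conjoining the (well-defined) marginal statistics over each block; this $\mu_\psi$ is by construction a JD for all measurements and is outcome-deterministic at the ontic level. This is essentially the $\Sigma$-ontology machinery of Section \ref{sec:sqcup}, restricted to the case where the compression map $\varphi_\triangledown$ does not need to resolve any pairwise incongruence.

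For the converse, I would exhibit a counterexample, since the claim is purely the failure of the implication ``intransitivity $\Rightarrow$ no JD.'' The easiest family is the $m=2$ case: the paper already observes (after Example following Eq.~(\ref{eq:equil})) that for two fiducial measurements a JD always exists, yet the two may be pairwise incongruent. One can dress this up into a nontrivial intransitive example by taking three pairwise incongruent measurements whose joint statistics happen to factorize or otherwise admit a JD; e.g.\ starting from a product state in the Spekkens-like regular theory of Eq.~(\ref{eq:spekkens}) and restricting to a preparation-noncontextual subset, one obtains intransitive pairwise incongruence but an explicit classical JD of the product form. The existence of one such state (or equivalently one such theory) suffices to refute the converse.

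The main obstacle will be the forward direction, specifically making precise what ``extendable'' buys beyond pairwise congruence. Pairwise joint measurability does not generically imply triple (or higher) joint measurability -- this is exactly the Specker-type phenomenon that underlies contextuality -- so the hypothesis of extendability has to be deployed to inductively upgrade $n$-fold congruence to $(n{+}1)$-fold joint measurability. I would carry this induction out using Theorem \ref{thm:noJO} at each stage on the associated state space of the growing cluster, showing by the marginalization argument (as in Eqs.~(\ref{eq:jointmeza})--(\ref{eq:jointmezb})) that the joint operator on the $(n{+}1)$-fold cluster is uniquely and consistently determined on pure states, and therefore on all of $\Sigma$. Once this induction closes, the construction of $\Sigma_\sqcup$ and of $\mu_\psi$ is routine, and the counterexample for the converse is essentially an observation rather than a computation.
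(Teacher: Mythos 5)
Your forward direction is essentially the paper's own argument: partition the fiducial measurements into congruence equivalence classes, invoke extendability to make each class a single jointly measurable ``class measurement'', and take the product of the class distributions to get a JD (the paper writes this as $P = P^Q_1 P^Q_2\cdots$, Eq.~(\ref{eq:QJD})). Your extra geometric dressing via $\Sigma_\sqcup$ and the inductive use of Theorem~\ref{thm:noJO} is more than the paper does but is harmless; note only that ``extendable'' is \emph{defined} in the paper as the assumption that chains of pairwise congruence yield joint measurability, so there is nothing to prove there by induction.

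The converse, however, has a genuine gap. You need a congruence structure that is actually \emph{intransitive} yet still admits a JD, and neither of your proposed witnesses is intransitive. Intransitivity requires some triple with $R(A,B)$ and $R(B,C)$ but $\neg R(A,C)$; it therefore needs at least one congruent pair. The $m=2$ case cannot exhibit intransitivity at all, and ``three pairwise incongruent measurements'' (e.g.\ the Spekkens-like theory of Eq.~(\ref{eq:spekkens})) has an \emph{empty} congruence relation, which is vacuously transitive --- indeed the paper's own proof explicitly files that case under ``congruence is trivially transitive.'' So your counterexample refutes nothing. The paper's construction is specifically engineered to break transitivity while preserving a JD: four measurements $\mathcal{A}=\{A_1,A_2\}$, $\mathcal{B}=\{B_1,B_2\}$ with every $(A_j,B_k)$ congruent and $R(B_1,B_2)$ but $\neg R(A_1,A_2)$, giving the intransitive chain $R(A_1,B_1)$, $R(B_1,A_2)$, $\neg R(A_1,A_2)$; the JD is then exhibited not as a product but via the conditional-independence formula
\begin{equation}
P(A_1,A_2,B_1,B_2) = \frac{P(A_1,B_1,B_2)\,P(A_2,B_1,B_2)}{P(B_1,B_2)},
\end{equation}
whose marginals reproduce the jointly measurable statistics. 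To repair your proof you must replace your examples with one of this mixed type (some congruent pairs, some incongruent), and a simple product form will generally not suffice as the JD.
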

\begin{proof} 
The classical case,  where all measurements are  pairwise congruent, is
trivial.  For the case where no  two variables are congruent (and thus
congruence  is  trivially transitive),  given  measurements  $X, Y,  Z,
\cdots$,  one can  always construct  the JD  in an  underlying theory,
given  by $P(x,y,z,\cdots)  {:=}  P(x)P(y)P(z)\cdots$.   For the  case
where  incongruence  exists  but  congruence  is  transitive,  we  can
partition  all measurements  into equivalence  classes $\mathcal{Q}_j$,
whereby all  measurements $Q_j^{(k)}$  within the class  are congruent,
and any two measurements in  different classes, such as $Q_m^{(j)}$ and
$Q_n^{(k)}$ $(m\ne  n)$ are incongruent.   All measurements in  a given
equivalence  class   are  jointly  congruent  and   therefore  jointly
measurable, by the assumption of extendability.

We  can  thus  treat  each  equivalence  class  as  a  single  ``class
measurement'' $\mathcal{Q}_j$ determined  by a probability distribution
$P^Q_j$.  Therefore  we can always  write down  a JD in  an underlying
(outcome deterministic) model for all measurements, given by:
\begin{equation}
P(q^{(1)}_1,q^{(2)}_2,\cdots,q^{(1)}_2,q^{(2)}_2,\cdots) = P^Q_1P^Q_2\cdots 
\label{eq:QJD}
\end{equation}
where  the LHS  denotes the  probability over  all measurements  in the
theory. 

To see that  the converse is not true, consider  the theory (fragment)
$\mathcal{T}_{2,2}$  consisting  of   four  measurements,  $\mathcal{A}
\equiv \{A_1, A_2\}$ and $\mathcal{B}  \equiv \{B_1, B_2\}$, such that
any  element  of  $\mathcal{A}$  is  congruent  with  any  element  of
$\mathcal{B}$,  i.e.,  any pair  $(A_j,B_k)$  is  congruent, which  we
represent by $R(A_j,B_k)$ where $R$ represents the congruence relation
template.   Now suppose  we have  $\neg R(A_1,A_2)$  but $R(B_1,B_2)$,
i.e., incongruence on only one side.  We then have intransitivity from
the            chains           $R(A_1,B_1),R(B_1,A_2)$            and
$R(A_1,B_2),R(B_2,A_2)$. However,  this does not  lead to lack  of JD,
because we  can construct  a JD  based on the  quantities that  can be
jointly measured:
\begin{equation}
P(A_1,A_2,B_1,B_2) = \frac{P(A_1,B_1,B_2)P(A_2,B_1,B_2)}{P(B_1,B_2)},
\label{eq:6gd}
\end{equation}
which has  been constructed such  that tracing  out $A_1$ or  $A_2$ or
$A_1,A_2$ returns the operational  probabilities. \hfill \qed
\end{proof}

In light of this result, the existence of incompatibility is necessary
but  not   sufficient  for  a  Kochen-Specker   theorem  contextuality
\cite{KS67}.  A  non-trivial super-transitive structure  of congruence
is required to thwart JD.

In respect  of the above  example, note  that if we  have incongruence
among both the $A_j$'s and  $B_k$'s, i.e., $\neg R(A_1,A_2)$ and $\neg
R(B_1,B_2)$, then it can be shown  that JD does not exist.  Therefore,
in such a  $\{\mathcal{A}, \mathcal{B}\}$ scenario, we  obtain a tight
link between congruence and JD, which clarifies such a link studied by
various authors \cite{BGG+0, *Ban15, SAB+05, WGC09, QVB14, SB14}.

\subsection{Unextendability of congruence \label{sec:hic}}

We  now  point out  how  congruence  of  measurements  can fail  to  be
extendable.   An example  is  provided  by Specker's  ``overprotective
seer'' (OS)  correlations \cite{LSW11}.   Here let $A,  B$ and  $C$ be
three dichotomic measurements, such that  any two are congruent and can
be  measured,  and the  outcomes  will  be anticorrelated  with  equal
probability.  However, they can't be jointly measured because there is
no JD over  measurements $A,B,C$ for the  state. If it did,  it must be
probability  distribution   of  the   8  three-bit  string   $abc  \in
\{+1,-1\}^3$. Anticorrelation on $AB$ and  on $BC$ implies $abc$ is of
the  pattern  $+-+$ or  $-+-$,  but  in this  case  $AC$  will not  be
anticorrelated.  The quantity $\langle AB\rangle + \langle BC\rangle +
\langle  AC\rangle$,   which  is  the  OS   inequality,  thus  reaches
noncontextuality the minimum of -1.

We can now extend Specker's idea  to an ``extended OS'' (XOS) theory .
Consider  the state  $\rho$ in  an operational  theory, which  has the
property that for any threeway joint-measurement, all three outcomes
must be distinct.  There is no  JD over measurement outcomes $a, b, c$
and $d$ that can satisfy this.  To  see this suppose that we assign 0,
1, 2 to  $A, B, C$.  Then to satisfy  the distinctness requirement for
$B, C, D$, we should assign 0 to $D$.  However, this assignment scheme
will imply that measuring $A, B, D$ would yield $0,1,0$, violating the
outcome distinctness requirement.

This  contextual  correlation  in  the   extended  OS  theory  can  be
experimentally tested via the violation of the inequality:
\begin{equation}
\langle \overline{ABC}\rangle +
\langle \overline{BCD}\rangle +
\langle \overline{CDA}\rangle +
\langle \overline{DAB}\rangle \le 2,
\label{eq:XOS}
\end{equation}
where  each  of the  four  summands  is experimentally  determined  as
follows.

For any  three jointly measurements  (say $A, B, C$),  in each
trial determine the maximum, intermediate  and minimum outcomes of the
three  measurements, which  are  denoted $v_{\max},  v_{\rm mid}$  and
$v_{\min}$,  respectively, provided  the three  are distinct.   If the
three outcomes are  not distinct but one of them  is distinct from the
other two,  then assign $v_{\max}$  (resp., $v_{\min}$) to  the larger
(resp., smaller)  outcome and assign  $v_{\rm mid}$ to the  outcome is
majority. (E.g., if  measuring $A, B, C$ produces  outcomes 0,1,1 then
assign 0 to $v_{\min}$ and 1 to both $v_{\rm mid}$ and $v_{\max}$.) If
all three  outcomes of  a three-way joint measurement  are identical,
then  assign  that value  uniformly  to  $v_{\max}, v_{\rm  mid}$  and
$v_{\min}$.

For any three measurements (say $A, B, C$), the quantity 
\begin{equation}
\overline{ABC}        \equiv        \langle       (v_{\max}        -
v_{\rm mid})(v_{\rm mid}-v_{\min})  \rangle,
\end{equation}
where the  angles indicate the  average over  a number of  trials.  In
each   trial,   the   quantity  $(v_{\max}   -   v_{\rm   mid})(v_{\rm
  mid}-v_{\min})$  is  1 only  if  the  outcomes are  distinct  (i.e.,
0,1,2), and zero otherwise (e.g., 0,1,1 or 1,1,1).  The other summands
in  Eq.   (\ref{eq:XOS})   are  calculated  in  this   way.   For  any
noncontextual value assignment of $A, B, C, D$, it may be checked that
the LHS is bounded above by 2.  On the other hand, for XOS theory, the
LHS of (\ref{eq:XOS}) attains the algebraic maximum of 4.

The above consideration about other  possible structures prompts us to
reflect on  why compatibility in  QM doesn't seem to  be unextendable.
Here we propose  that this is ``the most natural''  for a nonclassical
theory.  The reason is that otherwise, nonseparability would intervene
mysteriously  at some  higher despite  compatibility at  lower orders.
This would mean  that nonseparability does not occur  as a consequence
of  failure  of  extendability  of  congruence.   Thus,  the  ``nice''
connection to intransitivity  of congruence would be  lost, making the
theory   more   complicated.   From   this   perspective,   QM  is   a
\textit{natural}  nonclassical  theory,   which  provides  a  possible
explanation for the absence of OS type correlations in QM.

\section{Ontology for contextual theories \label{sec:cbox}}

From the perspective of  $\Sigma$-ontology, basic nonclassicality of a
regular theory arises essentially because of the compressive action of
$\varphi_\triangledown$ on the underlying simplex $\Sigma_\sqcup$ (the
map  $\varphi_\star$   adds  attractive   features  like   pure  state
preparability).   The  simplex  $\Sigma_\sqcup$ itself  has  a  direct
product  structure  and  is   quite  classical.   In  particular,  the
underlying versions of the measurements are separable, meaning that any
state $\psi$ has  a JD over the values of  the measurements.  This idea
forms  the basis  of the  underlying simplex  $\Sigma_\sqcup$.  Higher
nonclassicality  will be  seen  to  correspond to  the  addition of  a
further  fundamental type  of nonclassicality,  which occurs  when the
underlying simplex  itself deviates from the  direct product structure
form, leading to quantum contextuality. 

When a  contextuality inequality is  violated and hence no  JD exists,
the underlying simplex $\Sigma_\sqcup$  of the type described earlier,
which assumes  separable measurements, does  not exist.  One  can still
envisage an underlying simplex, provided  we enlarge the convex direct
product suitably.   We shall  find that the  \textit{convex contextual
  product},   denoted  by   the  symbol   $\votimes$,  fulfills   this
requirement.   Essentially, the  operation $\votimes$  is a  recursive
application of  $\uotimes$ to combinations of  measurements.  This will
be      used      to      define      an      underlying      simplex,
$\Sigma_\sqcup^{\theta\bullet}$,  suitable  for  contextuality.   Here
$\theta$ is  the ``valency'' of  an measurement, meaning the  number of
other  measurements  it  must  combine  with,  to  determine  the  full
measurement  context.   
% In fact,  larger valencies  correspond to  higher-order interference
% than   observed  in   a   doubles-slit  interferometric   experiment
% \cite{Sor94}.
We  shall  sometimes  indicate  the  valency by  a  subscript  to  the
$\votimes$ operator.

Suppose  $\theta=1$, so  that the  context  is specified  by a  single
fiducial measurement. Given fiducial measurements  $X, Y, Z, \cdots$, we
define
\begin{equation}
\Sigma_X \votime \Sigma_Y \votime \Sigma_Z \votime
\cdots \equiv \Sigma_{XY} \uotimes \Sigma_{YZ}
\uotimes \Sigma_{XZ} \uotimes \cdots,
\label{eq:votimes}
\end{equation}
where  $\Sigma_{XY} =  \Sigma_X \uotimes  \Sigma_Y$, and  so on.   For
$\theta=2$,  the RHS  will  contain operands  such as  $\Sigma_{XYZ}$,
meaning that the  context is specified by two  measurements, an example
of which is studied later.  Here it suffices to note that, as with the
convex  direct  product,  if  the   convex  sets  being  combined  are
simplexes,  then  the convex  contextual  product  is a  simplex.   In
general, the  marginal probabilities $P_X,  P_Y, P_Z, \cdots$  are not
well defined,  which may  be averted  by imposing  additional backward
consistency conditions.   In the absence of  backward consistency, the
full combined space  will be of the  form $\Sigma_{XYZ\cdots} \uotimes
\Sigma_{XYZ\cdots}^{(1)}  \uotimes  \Sigma_{XYZ\cdots}^{(2)}  \cdots$,
i.e., probabilities are explicitly specified for different valencies.

In      particular,      the      vertices     of      convex      set
$\Sigma_\sqcup^{\theta\bullet}$     are    in     general    classical
(deterministic) contextual configurations, meaning, they correspond to
a  classical   mechanism  to   reproduce  contextual   behavior.   The
underlying     simplex     $\Sigma_\sqcup$introduced    in     Section
\ref{sec:simplionto} corresponds to $\theta=0$.

An  example  for  a  classical contextual  configuration  suitable  to
simulate the OS  theory would be as follows: the  properties $X, Y, Z$
have definite  values 0, 0,  1, except that  if $X$ is  measured along
with  $Y$,  then  a  hidden   mechanism  causes  a  ``signal''  to  be
transmitted from  $Y$ to $X$  instructing $X$  to flip its  bit value.
This may be represented as follows: {\small
\begin{equation}
\begin{array}{|c|c|}
\hline
{\rm input} & {\rm output} \\
\hline
XY & 10\\
\hline
\end{array}
\otimes
\begin{array}{|c|c|}
\hline
{\rm input} & {\rm output} \\
\hline
YZ & 01\\
\hline
\end{array}
\otimes
\begin{array}{|c|c|}
\hline
{\rm input} & {\rm output} \\
\hline
ZX & 10\\
\hline
\end{array}
\label{eq:ccc}
\end{equation}} 
For  the  OS  theory,  there  are  $4^3  =  64$  classical  contextual
configurations such  as (\ref{eq:ccc}), which constitute  the vertices
of  a   63-simplex,  denoted  $\Sigma_\sqcup^{1\bullet}$,   or  simply
$\Sigma_\sqcup^{\bullet}$, given by:
\begin{equation}
\Sigma_\sqcup^\bullet = \Sigma_X ~\overline{\otimes}~ \Sigma_Y
~\overline{\otimes}~ \Sigma_Z.
\label{eq:otensor}
\end{equation}
Our strategy will be to use  this object as the underlying simplex and
impose backward consistency  on the gdit space  derived by application
of the compression operation $\varphi_\triangledown$.

In  g-type   ontology,  the   action  of   $\varphi_\triangledown$  on
$\Sigma_\sqcup^\bullet$ is a transformation analogous to that of going
from   the   form   of    Eq.   (\ref{eq:tensorprod})   to   that   of
Eq.   (\ref{eq:tensorsum}).   Under   this   mapping,  the   classical
contextual configuration Eq.  (\ref{eq:ccc}) yields {\small
\begin{equation}
\begin{array}{|c|c|}
\hline
{\rm input} & {\rm output} \\
\hline
XY & 10\\
\hline
\end{array}
\oplus
\begin{array}{|c|c|}
\hline
{\rm input} & {\rm output} \\
\hline
YZ & 01\\
\hline
\end{array}
\oplus
\begin{array}{|c|c|}
\hline
{\rm input} & {\rm output} \\
\hline
ZX & 10\\
\hline
\end{array},
\end{equation}}
which can be represented more conventionally as:
\begin{equation}
\begin{array}{c|c}
\hline
\textrm{input} ~&~ \textrm{output} \\
\hline 
XY & 10 \\
\hline
YZ & 01 \\
\hline
ZX & 10 \\
\hline
\end{array}
\label{eq:cgdit}
\end{equation}
the \textit{contextual  gdit}, since it  is the gdit version  of these
classical    deterministic    configurations.     Since    compression
$\varphi_\triangledown$  preserves the  extreme points,  there are  64
contextual gdits, whose  convex hull is the  corresponding gdit theory
space $\Sigma_\cup^{\theta\bullet}$, the \textit{underlying contextual
  gdit space}, with  dimension $3 \times (4-1)=9$  (a 7-fold reduction
in dimensionality).

This  contextual gdit  violates  the  ``no-signaling'' conditions  for
single   systems,   more   precisely,   \textit{non-contextuality   of
  probabilities} in the sense of Gleason's theoreom \cite{Sor94,G57}:
\begin{align}
P(a|AB)  &=  P(a|AC),\nonumber\\ 
P(b|BA)  &=  P(b|BC),\nonumber\\ 
P(c|CA)  &= P(c|CB),
\label{eq:Gleason}
\end{align}
where $P(a|AB)$ denotes the probability to obtain outcome $a$ when $A$
is measured alongside $B$.

It  is  convenient to  compose  the  crumpling  map  in terms  of  two
sub-mappings: $\varphi_\star = \varphi_\star^- \circ \varphi_\star^+$.
The   mapping    $\varphi_\star^+:   \Sigma_\cup^\bullet   \rightarrow
\Sigma^+$   imposes   the   ``contextual   no-signaling''   conditions
(\ref{eq:Gleason}), where $\Sigma^+$  refers to the \textit{contextual
  no-signaling  polytope}, i.e.,  the largest  convex set  embedded in
$\Sigma_\cup^\bullet$   consistent   with   contextual   no-signaling.
Obviously, $\Sigma \subseteq  \Sigma^+$, and mapping $\varphi_\star^-:
\Sigma^+ \rightarrow \Sigma$.

For a  theory of 3-input-$n$-outcome systems,  Eq.  (\ref{eq:Gleason})
would impose $N_G  \equiv 3(n-1)$ conditions, whilst  the dimension of
the contextual  gdit theory $\Sigma_\cup^{\bullet}$ underlying  the OS
theory would  be $D_\cup^{\bullet} \equiv 3(n^2-1)$.   Therefore, with
imposition  of conditions  (\ref{eq:Gleason}),  the  dimension of  the
contextual no-signaling polytope $\Sigma^+$,  embedded within the gdit
polytope  $\Sigma_\cup^{\bullet}$ is,  $D_\cup^{\bullet}-N_G=3n(n-1)$.
By   tomographic  separability,   the   nonsimplex   $\Sigma$  for   a
3-input-$n$-output regular  theory has dimension  $3(n-1)$.  Therefore
the  map $\varphi_\star^-:  \Sigma^+  \rightarrow  \Sigma$ involves  a
further  $n$-fold reduction  of  dimensionality (for  a discussion  on
dimensionality mismatch in a different context, cf. \cite{BKL+14}).

\begin{example}
The following six contextual OS gdits
\begin{equation}
\begin{array}{c||c|c|c|c|c|c}
\toprule
 \multirow{ 2}{*}{\rm Input} & \multicolumn{6}{c}{\rm Output} \\
\cline{2-7}
 & \mathcal{Q}_1^A & \mathcal{Q}_2^A 
 & \mathcal{Q}_1^B & \mathcal{Q}_2^B &
\mathcal{Q}_1^C & \mathcal{Q}_2^C \\
\hline
AB & 01 & 10 & 01 & 10 & 01 & 10 \\
BC & 10 & 01 & 01 & 10 & 10 & 01 \\
CA & 01 & 10 & 10 & 10 & 10 & 01 \\
\hline
\end{array},
\label{tab:os}
\end{equation}
each of  which violates the  OS inequality, by reaching  the algebraic
minimum of $-3$.  In Eq.  (\ref{tab:os}), the first column is the pair
of joint inputs, and each other  column is the output corresponding to
a given deterministic contextual  box $\mathcal{Q}_j^k$.  Here the box
$\mathcal{Q}_2^A$     is    just     the    gdit     represented    in
Eq. (\ref{eq:cgdit}).   The superscript corresponds to  the measurement
for which  a Kochen-Specker contradiction  would arise in  a realistic
assignment of values.

The  vertices  of  the  contextual  no-signaling  polytope  $\Sigma^+$
derived from the above contextual gdit polytope has vertices formed by
equally mixing a gdit like  $\mathcal{Q}^j_1$ and its ``partner'' gdit
$\mathcal{Q}^j_2$, where  the latter  is obtained  from the  former by
flipping the outcome bits, and $j \in \{A, B, C\}$. For the contextual
gdits   in   (\ref{tab:os}),   we  obtain   the   three   nonsignaling
\textit{contextual         boxes}        $\mathcal{Q}^j         \equiv
\half\left(\mathcal{Q}^j_1  +  \mathcal{Q}^j_2\right)$, which  violate
Eq.  (\ref{eq:XOS}) by reaching its algebraic minimum, but respect the
Gleason no-signaling conditions (\ref{eq:Gleason}).

Any state in  the polytope generated by  these contextual no-signaling
boxes  $\mathcal{Q}^j$  has  uncertainty  $\mathcal{U}=\half$  as  per
(\ref{eq:unc}),  where  the  uncertainty arises  from  the  underlying
mixing.   We  observe that  the  no-signaling  feature is  essentially
equivalent  to backward  consistency.   Therefore, any  such state  in
$\Sigma^+$  is   backward  consistent,   in  contrast  to   states  in
$\Sigma^\bullet_\cup - \Sigma^+$, in  particular, the contextual gdits
$\mathcal{Q}^j_k$.   The   $\mathcal{Q}^j$'s  are   the  contextuality
analogues of  the PR boxes  \cite{Sca06,PR94}, which violate  the CHSH
inequality for  quantum nonlocality to its  algebraic maximum.  \hfill
\qed
\end{example}

The complementarity between signaling and randomness in the context of
nonlocality   \cite{ASQIC},   based   on   techniques   developed   in
\cite{Pir03}, and  first conjectured  in \cite{KGB+11,Hal10b},  can be
formulated   analogously   for  contextuality,   reported   elsewhere.
Similarly,   one   may   consider    the   single-system   analog   of
device-independent  quantum cryptography  \cite{BB84,CK78,VV14,BHK05},
based on contextuality inequalities \cite{KCB+08}.

As  in  the case  of  OS  theory,  the underlying  contextual  simplex
$\Sigma_\sqcup^{2\bullet}$ for the XOS  theory can be constructed with
vertices given  by classical signaling configurations.   Under the map
$\varphi_\triangledown$,   these  configurations   transform  to   XOS
contextual gdits, which are the  vertices of the underlying contextual
gdit polytope, $\Sigma^{2\bullet}_\cup$.  Three such contextual gdits,
which are the XOS equivalents of (\ref{tab:os}), are:
\begin{equation}
\begin{array}{c||c|c|c}
\toprule
\multirow{2}{*}{\rm Input} & \multicolumn{3}{c}{\rm Output} \\
\cline{2-4}
 ~ & \mathcal{R}_1 & \mathcal{R}_2 & \mathcal{R}_3 \\
\hline
XYZ & 012 & 120 & 201 \\
\hline
YZW & 120 & 201 & 012 \\
\hline
ZWX & 201 & 012 & 120 \\
\hline
WXY & 012 & 120 & 201 \\
\hline
\end{array},
\end{equation}
each of which violates the XOS contextuality inequality (\ref{eq:XOS})
to its algebraic maximum of 4.  As in the OS case, each contextual XOS
gdit  $\mathcal{R}_j$ violates contextual no-signaling.   

Under  the  crumpling map  $\varphi_\star^+:  \Sigma_\sqcup^{2\bullet}
\rightarrow \Sigma^+$  we obtain the contextual  no-signaling polytope
$\Sigma^+$ appropriate to  the XOS toy theory,  whose vertices include
contextual boxes, one of which is obtained by taking a uniform average
$\frac{1}{3}      \left(\mathcal{R}_1      +      \mathcal{R}_2      +
\mathcal{R}_3\right)$.

Finally,  it is  worth contrasting  our combinatoric  approach to
contextuality   from    that   of    \cite{CSW14,   abramsky2011sheaf,
  silva2017graph},   where   graph    theoretic   representations   of
contextuality  and nonlocality  are  explored, which  we mentioned  in
Section  \ref{sec:simplionto}.   One key  difference  is  that in  our
approach,  the   adjacent  vertices   of  the  graph   representing  a
contextuality  situation  are   compatible  measurements,  whereas  in
\cite{CSW14}  they are  exclusive  events associated  with  a pair  of
measurements and  in \cite{abramsky2011sheaf} with  orthogonal vectors
in  $\mathbb{R}^d$.   The  difference   in  these  approaches  may  be
attributed to the different motivations. For example, in \cite{CSW14},
the authors aim to uncover a fundamental axiom to explain the level of
quantum   violation  of   contextuality   inequalities,  whereas   our
motivation here has been to  interpret contextuality as a higher-level
manifestation  of nonclassicality,  that  is fundamentally  associated
with the nonsimplicial character.

\section{Conclusions \label{sec:conclu}}

For  a class  of general  probability theories  of single  systems, we
point  out  that  a  number of  nonclassical  features  like  multiple
pure-state decomposability,  measurement disturbance, lack  of certain
universal operations,  measurement uncertainty and no-cloning,  can be
accounted  for  by the  assumption  that  the  state  space is  not  a
simplex. These nonclassical  features can be explained in  terms of an
epistemic interpretation for which an  underlying simplex serving as a
noncontextual  ontological model.   A  base-level nonclassical  theory
would be nonclassical in the  sense of possessing a no-cloning theorem
etc., but would lack KS contextuality.  An example of such a theory is
Spekkens  toy theory  \cite{Spe07}.   For  contextuality, the  further
assumption about the intrasitivity of congruence is necessary.

In  Spekkens'  toy  theory,  a single  system  has  three  measurements
(``questions'').  All  equal mixtures  of maximal knowledge  states of
the  three measurements  are  operationally indistinguishable,  meaning
that the state space is non-simplicial, though in that work it was not
described  geometrically,  but  instead  as  a  manifestation  of  the
``knowledge-balance''  principle.  On  the  other hand,  all of  three
measurements  are  pairwise incongruent,  so  that  the possibility  of
intransitivity of pairwise congruence does not arise.  Our result thus
explains why the theory reproduces many nonclassical features, but not
KS contextuality.   Therefore, in response to  Spekkens' open question
\cite[Section  IX]{Spe07}, as  to  what other  principle, besides  the
knowledge-balance  principle, is  required  to  capture the  remaining
quantum phenomena,  our approach suggests: namely,  the intransitivity
or unextendability of pairwise congruence.

Here we only consider single  systems, without reference to correlated
systems, in  contrast to various current  foundational studies.  Thus,
in the absence of other assumptions, all nonclassical features derived
in our study are logically independent of whether or not the theory in
question is nonlocal or nonsignaling.

By identifying  nonclassical features  that are logically  related and
those  that  are logically  independent  in  any regular  theory,  our
approach could provide a new perspective for axiomatic reconstructions
of      QM       \cite{      Dar07,      chiribella2016quantum,
  chiribella2017quantum,MM11, DB11, BMU14}.

In QM, various nonclassical features  are tied closely together by the
single theme of noncommutative algebra, making it difficult to discern
basic   interdependence  from   independence.    Our  approach   helps
disentangle the relationship  between different nonclassical features.
For instance, it shows that any nonclassical theory $\mathcal{T}$ with
a  no-cloning   theorem  is  also  expected   to  possess  measurement
uncertainty and measurement disturbance.   Thus, these latter features
do not require any ``further explanation''.

Finally,  we  believe  that  the explanatory  value  of  an  axiomatic
approach is enhanced  by the choice of axioms  that appear ``natural''
and  that are  naturally connected  with each  other.  Thus,  once one
isolates the most elementary surprise about a theory and identifies it
as  the  basic axiom,  this  should  lead  us  smoothly to  the  other
axiom(s).   In our  work, it  may be  hoped that  pairwise measurement
incongruence serves as such a basic axiom for QM, which then naturally
leads  to the  question of  (in)congruence among  multiple properties,
leading to contextuality.

The requirement  of naturalness of  axioms is not  entirely aesthetic,
but based  on our perception that  fulfilling it would position  us in
the state of ``the right ignorance''  that would guide us to the right
questions answering  which would  enable us to  understand why  QM has
this particular mathematical structure.

\begin{acknowledgments}
SA  acknowledges support  through Manipal  university and  the INSPIRE
fellowship  [IF120025] by  the Department  of Science  and Technology,
Govt.   of India.   AP  thanks Department  of  Science and  Technology
(DST),  India for  the  support provided  through  the project  number
EMR/2015/000393.
\end{acknowledgments}

% \bibliography{preeti}
\bibliography{thesis,preeti}
 \end{document}